\definecolor{DarkGreen}{rgb}{0.1,0.5,0.1}
\definecolor{DarkRed}{rgb}{0.5,0.1,0.1}
\definecolor{DarkBlue}{rgb}{0.1,0.1,0.5}
\newcommand{\cC}{\ensuremath{\mathcal{C}}} 
\newcommand{\cM}{\ensuremath{\mathcal{M}}}
\newcommand{\R}{{\mathbb R}}
\newcommand{\F}{{\mathbb F}}
\newcommand{\N}{\ensuremath{\mathbb{N}}}
\newcommand{\PR}[1]{\mathrm{Pr}\left[ #1\right]}
\renewcommand{\Pr}{\mathrm{Pr}}
\newcommand{\E}[1]{\mathbb{E}\left[ #1\right]}
\newcommand{\Eover}[2]{\mathop{\mathbb{E}}_{#1}\left[ #2 \right]}
\newcommand{\Var}[1]{\mathrm{Var}\left( #1\right)}
\newcommand{\PRover}[2]{\mathop{\mathrm{Pr}}_{#1}\left[ #2\right]}
\newcommand{\ceil}[1]{\lceil {#1}\rceil}
\newcommand{\floor}[1]{\left\lfloor {#1}\right\rfloor}
\newcommand{\inabs}[1]{\left|#1\right|}
\newcommand{\inset}[1]{\left\{#1\right\}}
\newcommand{\inparen}[1]{\left(#1\right)}
\newcommand{\suchthat}{\,:\,}
\newcommand{\supp}{\mathrm{supp}}
\newcommand{\spn}{\ensuremath{\operatorname{span}}}
\newcommand{\ind}[1]{^{(#1)}}
\newcommand{\eps}{\varepsilon}
\renewcommand{\epsilon}{\varepsilon}
\DeclareMathOperator*\Bernoulli{Bernoulli}
\DeclareMathOperator*\Ber{Bernoulli}
\DeclareMathOperator*\Uni{Uniform}
\newcommand{\wt}[1]{\mathrm{wt}\mleft(#1\mright)}
\renewcommand{\exp}[1]{\mathrm{exp}\mleft(#1\mright)}
\theoremstyle{plain}
\declaretheorem[name=Theorem,numberwithin=section]{theorem}
\declaretheorem[name=Lemma,sibling=theorem]{lemma}
\newtheorem*{lemma*}{Lemma} 
\newtheorem*{theorem*}{Theorem} 
\newtheorem{definition}[theorem]{Definition}
\newtheorem{remark}[theorem]{Remark}
\newtheorem{proposition}[theorem]{Proposition}
\newcommand\defeq{\ensuremath{\stackrel{\rm def}{=}}} 
\title{Bounds for list-decoding and list-recovery of random linear codes\thanks{
RL, SS and MW are partially funded by NSF-CAREER grant CCF-1844628, NSF-BSF grant CCF-1814629, and a Sloan Research Fellowship. RL is partially supported by NSF GRFP grant DGE-1656518. SS is partially supported by a Google Graduate Fellowship. VG, JM, and NR are partially funded by NSF grants CCF-1563742 and CCF-1814603. NR is partially supported by NSF grants CCF-1527110, CCF-1618280, CCF-1814603, CCF-1910588, NSF CAREER award CCF-1750808 and a Sloan Research Fellowship.}
}
\author[2]{Venkatesan Guruswami}
\author[1]{Ray Li}
\author[2]{Jonathan Mosheiff} 
\author[2]{Nicolas Resch}
\author[1]{Shashwat Silas}
\author[1]{Mary Wootters}
\affil[1]{Stanford University}
\affil[2]{Carnegie Mellon University}
\date{April, 2020}
\begin{document}

\maketitle

\vspace{-5ex}
\begin{abstract}
A family of error-correcting codes is list-decodable from error fraction $p$ if, for every code in the family, the number of codewords in any Hamming ball of fractional radius $p$ is less than some integer $L$ that is independent of the code length. It is said to be list-recoverable for input list size $\ell$ if for every sufficiently large subset of codewords (of size $L$ or more), there is a coordinate where the codewords take more than $\ell$ values. The parameter $L$ is said to be the ``list size" in either case. The capacity, i.e., the largest possible rate for these notions as the list size $L \to \infty$, is known to be $1-h_q(p)$ for list-decoding, and $1-\log_q \ell$ for list-recovery, where $q$ is the alphabet size of the code family. 

In this work, we study the list size of \emph{random linear codes} for both list-decoding and list-recovery as the rate approaches capacity. We show the following claims hold with high probability over the choice of the code (below $q$ is the alphabet size, and $\epsilon > 0$ is the gap to capacity).

\begin{itemize}
\vspace{-1ex}
\itemsep=0ex
    \item A random linear code of rate $1 - \log_q(\ell) - \epsilon$ requires list size $L \ge \ell^{\Omega(1/\epsilon)}$ for list-recovery from input list size $\ell$. This is surprisingly in contrast to completely random codes, where $L = O(\ell/\epsilon)$ suffices w.h.p.

\item  A random linear code of rate $1 - h_q(p) - \epsilon$ requires list size  $L \ge \floor{h_q(p)/\epsilon+0.99}$ for list-decoding from error fraction $p$, when $\epsilon$ is sufficiently small.

\item  A random \emph{binary} linear 
code of rate $1 - h_2(p) - \epsilon$ is list-decodable from \emph{average} error fraction $p$ with list size with $L \leq \floor{h_2(p)/\eps} + 2$. (The average error version measures the average Hamming distance of the codewords from the center of the Hamming ball, instead of the maximum distance as in list-decoding.)
\end{itemize}

\vspace{-1ex}
The second and third results together precisely pin down the list sizes for binary random linear codes for both list-decoding and average-radius list-decoding to three possible values.

Our lower bounds follow by exhibiting an explicit subset of codewords so that this subset---or some symbol-wise permutation of it---lies in a random linear code with high probability.  This uses a recent characterization of (Mosheiff, Resch, Ron-Zewi, Silas, Wootters, 2019) of configurations of codewords that are contained in random linear codes.  Our upper bound follows from a refinement of the techniques of (Guruswami, H{\aa}stad, Sudan, Zuckerman, 2002) and strengthens a previous result of (Li, Wootters, 2018), which applied to list-decoding rather than average-radius list-decoding.
\end{abstract}

\thispagestyle{empty}
\setcounter{page}{0}
\newpage
\section{Introduction}\label{sec:intro}
In coding theory, one is interested in the combinatorial properties of sets $\cC \subseteq \F_q^n$.\footnote{Here and throughout the paper, $\F_q$ denotes the finite field with $q$ elements.  In this we work only consider linear codes, so we always assume that the alphabet is a finite field.}
Such a set $\cC$ is called a \emph{code} of \emph{length} $n$ over the \emph{alphabet} $\F_q$, and the elements $c \in \cC$ are called \emph{codewords}.

List-decoding, introduced by Elias and Wozencraft in the 1950's \cite{Elias57,Wozencraft58}, is such a combinatorial property. 
For $p \in [0,1]$ and integer $L \geq 1$,
we say that a code $\cC \subseteq \F_q^n$ is \emph{$(p,L)$-list-decodable} if, for all $z \in \F_q^n$, 
\[ |\inset{c \in \cC \suchthat \delta(c,z) \leq p } | < L, \]
where $\delta(x,y) = \frac{1}{n}\inabs{\inset{ i \suchthat x_i \neq y_i } }$ denotes relative Hamming distance.  That is, $\cC$ is list-decodable if not too many codewords of $\cC$ live in any small enough Hamming ball.
In this paper, we are interested in the trade-offs between $p$, $L$, and the rate of the code $\cC$.   The \emph{rate} $R$ of $\cC$ is defined as
$ R = \frac{\log_q|\cC|}{n}.$
The rate lies in the interval $[0,1]$, and larger is better.  

\paragraph{Variations of list-decoding.}
In this work, we consider standard list-decoding along with two variations.

The first variation is a strengthening of list-decoding known as \em average-radius list-decoding. \em 
A code $\cC$ is \emph{$(p,L)$-average-radius list-decodable} if for any set $\Lambda \subseteq \cC$ of size $L$ and $z \in \F_q^n$,
\[ \frac{1}{L} \sum_{c \in \Lambda} \delta(c,z) \geq p. \]
It is not hard to see that $(p,L)$-average-radius list-decodability implies $(p,L)$-list-decodability, and this stronger formulation has led to stronger lower bounds than are achievable otherwise~\cite{GuruswamiN14}.
In addition to stronger lower bounds, average-radius list-decoding---essentially replacing a maximum with an average in the definition of list-decoding---is a natural concept, and it has helped establish connections between list-decoding and compressed sensing~\cite{CheraghchiGV13}.

The second variation, known as \em list-recovery, \em is a version where the ``noise'' is replaced by uncertainty about each symbol of the received word $z$.  Formally, we say that a code $\cC$ is \emph{$(\ell, L)$-list-recoverable} if for any sets $S_1, \ldots, S_n \subseteq \F_q$ with $|S_i| \leq \ell$ for all $i$, 
\[ |\inset{c \in \cC \suchthat c_i \in S_i \,\forall i } | < L. \]
List-recovery was originally used as a stepping-stone to list-decoding and unique-decoding (e.g., \cite{GuruswamiI01, GuruswamiI02, GuruswamiI03, GuruswamiI04}) but it has since become a useful primitive in its own right, with applications beyond coding theory~\cite{INR10, NPR12, GNPRS13, HIOS15, DMOZ19}.

\paragraph{Pinning down the output list size.}
We are motivated by the problem of pinning down the output list size $L$ for (average-radius) list-decoding and for list-recovery.  For all three of these problems, given $q$ and $p$ (respectively, $q$ and $\ell$), there exists an \emph{optimal rate}, denoted $R^*$. Namely, $R^*$ is the largest rate so that, for any $\epsilon > 0$, there are $q$-ary codes of rate $R^*-\epsilon$ and arbitrarily large length, which are $(p,L)$-(average-weight)-list-decodable (resp. $(\ell,L)$-list-recoverable), for some $L(q,p,\epsilon)$ (resp. $L(q,\ell,\epsilon)$). Importantly, $L$ must not depend on the length of the code. The \em list-decoding capacity theorem \em gives the dependence of $R^*$ on $q$ and $p$ (resp. $q$ and $\ell$): $R^* = 1 - h_q(p)$ for both standard and average-radius list-decoding,~\cite{Elias91,ZyablovP81} and $R^* = 1 - \log_q(\ell)$ for list-recovery~(e.g., \cite{RudraW18}).

We are interested in the trade-off between the list size $L$, the parameters $p,q,\ell$ of the problem, and this gap $\eps$; we refer to $\eps$ as the \emph{gap to capacity.}
Pinning down the list size $L$ is an important problem.  For example, for many of the algorithmic applications within coding theory, the list size represents a bottleneck on the running time of an algorithm that must check each item in the list before pruning it down~\cite{GuruswamiI04, DvirL12, GuruswamiX12, GuruswamiX13, GuruswamiK16}.  For applications in pseudorandomness, for example to expanders or extractors, the list size corresponds to the expansion or to the amount of entropy in the input, respectively, and it is of interest to precisely pin down these quantities.

We make progress on pinning down the output list sizes for the case of \em random linear codes. \em  A random linear code is a uniformly random subspace of $\F_q^n$ of certain dimension.  The list-decodability of random linear codes has been well studied for many reasons~\cite{ZyablovP81,GHK11,CheraghchiGV13, Wootters13, RudraW14a, RudraW18, LiWootters}.  First, it is a natural mathematical question that studies the interplay between two fundamental notions in $\F_q^n$: subspaces and Hamming balls.  Second, there are constructions of codes which use random linear codes (and their list-decodability) as a building block~\cite{GuruswamiI04, GuruswamiR08b, HemenwayW15, HemenwayRW17}, and improvements in the parameters of random linear codes will lead to improvements in these constructions as well.  Third, random linear codes can be seen as one way to partially derandomize completely random codes; this is especially motivating in the binary (or fixed alphabet) case, where we do not know of any explicit constructions of optimally list-decodable codes, linear or otherwise.

\subsection{Contributions} \label{sec:contributions}
Our main results are improved bounds on the list size of random linear codes.
We defer the formal theorem statements until after we have set up notation, but we informally summarize our results here.  Below, we consider codes of \em rate \em $R^* - \eps$, where as above we use $R^*$ to denote best achievable rate for each particular problem.
\begin{enumerate}
\item[(1)] \textbf{Lower bound on the list size for list-recovery of random linear codes.}  We show that if a random linear code of rate $R^* - \eps$ is list-recoverable with high probability with input list sizes $\ell$ and output list size $L$, then we must have $L = \ell^{\Omega(1/\eps)}$.  This is in contrast to completely random codes, for which the output list size is $L = O(\ell/\eps)$ with high probability.

This gap between random linear codes and completely random codes demonstrates that in some sense zero-error list-recovery behaves more like \em erasure-list-decoding \em \cite{G03} than it does like list-decoding with errors.  Such a gap is present between general and linear codes in erasure list-decoding, but as we see below, there is no such gap for list-decoding from errors.

Our result extends to the setting of list-recovery with erasures as well.  The formal theorem statement and proof can be found in Section~\ref{sec:list_rec}.

\item[(2)] \textbf{Better lower bounds on the list size for list-decoding random linear codes.}  We show that if a $q$-ary random linear code of rate $R^* - \eps$ is list-decodable with high probability up to radius $p$ with an output list size of $L$, then we must have $L \ge \floor{\frac{h_q(p)}{\eps}+0.99}$.
By \cite{LiWootters}, this result is tight for list-decoding of binary random linear codes up to a small additive factor.
As an immediate corollary, $L \ge \floor{\frac{h_q(p)}{\eps}+0.99}$ for \emph{average-radius} list-decoding of random linear codes as well, and, as we will see below, this is also tight for binary random linear codes, up to a small additive factor.  We conjecture that the leading constant $h_q(p)$ is also correct for $q > 2$.

Previous work~\cite{GuruswamiN14} has established that $L = \Omega(1/\eps)$, but to the best of our knowledge this is the first work that pins down the leading constant.  In particular, \cite{GuruswamiN14} shows that, in the situation above, we have $L \geq c_{p,q}/\eps$, where $c_{p,q}$ is a constant that goes to zero as $p$ goes to $1 - 1/q$.  
In contrast, we show below that the leading constant is at least $h_q(p)$, which goes to $1$ as $p$ goes to $1 - 1/q$.

The formal theorem statement and proof can be found in Section~\ref{sec:list_dec}.

\item[(3)] \textbf{Completely pinning down the list size for average-radius list-decoding of binary random linear codes.}  We prove a new upper bound on the \em average-radius \em list-decodability of binary random linear codes, which matches our lower bound, even up to the leading constant.  More precisely, we show that with high probability, a random binary linear code of rate $R^* - \eps$ is \em average-radius \em list-decodable up to radius $p$ with $L \leq \floor{h_2(p) /\eps} + 2$.

Such a bound was known for standard list-decoding~\cite{LiWootters}, but our upper bound holds even for the stronger notion of average-radius list-decoding, and improves the additive constant by 1.\footnote{Under our definition of list-decoding, \cite{LiWootters} show $(p,L)$ list-decodability with $L=\floor{h(p)/\epsilon}+3$.}  In particular, this shows that for both list-decoding and average-radius list-decoding of binary random linear codes, the best possible $L$ is concentrated on at most three values: $\floor{h_q(p) /\eps} + 2,\floor{h_q(p) /\eps} + 1$ and $\floor{h(p)/\eps+0.99}$.
This tight concentration demonstrates the sharpness of our upper and lower bound techniques.

The formal theorem statement and proof can be found in Section~\ref{sec:upper_bd}.
\end{enumerate}

\subsection{Overview of techniques}\label{sec:techniques}
In this section, we give a brief overview of our techniques.

\paragraph{Lower bounds.}
To illustrate the techniques for our lower bounds, we warm up with a back-of-the-envelope calculation which suggests why the ``right'' answer for our result (1) above is $\ell^{\Omega(1/\eps)}$.  

Consider a random linear code $\cC \subset \F_q^n$, of rate $R = 1 - \log_q(\ell) - \eps$, where $\eps \in (0,\frac 12)$.
That is, $\cC$ is the kernel\footnote{This is one of several natural models for a random linear code. Another possible model is taking a uniformly random subspace of dimension $Rn$. It is not hard to see that the total variation distance between these distributions is exponentially small. In particular, our model yields a code of dimension exactly $Rn$ with probability $1-\exp{-\Omega(n)}$.} of a uniformly random matrix sampled from  $\F_q^{(1-R)n\times n}$. 
Suppose that $\ell$ is a prime power, and $q = \ell^t$ for some $t\ge 2$.  Thus, $\F_\ell$ is a sub-field of $\F_q$. Let $D$ be an integer slightly smaller than $\frac 1{2\eps}$ and let $L = \ell^D \ge \ell^{\Omega(1/\eps)}$. We claim that $\cC$ is unlikely to be $(\ell,L)$-list-recoverable.

Given a matrix $M\in \F_q^{n\times D}$, we write $M\subseteq \cC$ (``$\cC$ contains $M$'') to mean that each of the columns of $M$ is a codeword in $\cC$. 

Let $\cM$ denote the set of all full-rank matrices $M\in \F_q^{n\times D}$ that have the following property: for every row $M_i$ of $M$ there exists some $x_i\in \F_q^*$ such that all entries of $M_i$ belong to the set $x_i\cdot \F_\ell$. 

We will show that $\cM$ is \emph{bad} and \emph{abundant}. By \emph{bad} we mean that a linear code containing a matrix from $\cM$ cannot be $(\ell,L)$-list-recoverable. We say that $\cM$ is \emph{abundant} (for the rate $R$) if a random linear code of rate $R$ is likely to contain at least one matrix from $\cM$. Clearly, the combination of these properties means that $\cC$ is unlikely to be $(\ell,L)$-list-recoverable.

We first prove that $\cM$ is bad. Assume that $\cC$ contains some matrix $M\in \cM$. By linearity of the code, $\cC$ also contains every vector of the form $Mu$, for $u\in \F_q^D$. In particular, consider the set of vectors $B:=\left\{Mu \mid u\in \F_\ell^D\right\}\subseteq \cC$. Observe that $\cC$ cannot be $(\ell,L)$-list-recoverable, since $B$ is a ``bad list'' for list-recoverability with these parameters: First, since $M$ has full-rank, $B$ is of cardinality $\ell^D=L$. Now, given $i\in [n]$, we need to show that there exists a subset $S_i\subseteq \F_q$ with $|S_i| = \ell$, such that $v_i\in S_i$ for all $v\in B$. We take $S_i$ to be the set $x_i\cdot \F_\ell$, which contains all entries of the row $M_i$. For $j\in [D]$, write $M_{i,j} = x_i\cdot w_j$ ($w_j\in \F_\ell$), and let $v = Mu$ for some $u\in \F_\ell^D$. Then
$$v_i = \sum_{j=1}^D M_{i,j}u_j = x_i\cdot \left(\sum_{j=1}^D w_j\cdot u_j\right) \in x_i\cdot \F_\ell,$$
and we conclude that $\cM$ is bad.

Showing that $\cM$ is abundant is harder, and at this stage we only provide some intuition for this fact. Let us compute the expected number of matrices $M\in \cM$ that are contained in $\cC$. First, we estimate the cardinality of $\cM$. One may generate a matrix in $\cM$ by choosing each of its rows in an essentially independent fashion.\footnote{We say ``essentially'' since the resulting matrix might not have full rank, but this happens with negligibly small probability.}  Choosing a row amounts to choosing one of $\frac{q-1}{\ell-1}$ sets of the form $x\cdot \F_\ell$ ($x\in \F_q^*$) and then taking each entry to be an element of that set. Accounting for multiple counting of the all-zero row, the number of possible rows is thus $\frac{q-1}{\ell-1}\cdot (\ell^D-1)+1$, which we approximate as $q\cdot \ell^{D-1}$ .  Thus, $|\cM| \approx (q\cdot \ell^{D-1})^n$. Next, it is not hard to see that a random linear code contains a given matrix of rank $r$ with probability $q^{-(1-R)\cdot r\cdot n}$. Consequently, for $M\in \cM$, we have $\Pr[M\subseteq \cC] = q^{{-(1-R) D n}}$. Therefore,
\begin{align*}
\mathbb{E}\inabs{ \inset{ M \in \mathcal{M} \suchthat  M \subseteq \cC } } &=  |\cM| q^{-(1-R)Dn} \approx \inparen{q\cdot \ell^{D-1}}^n\cdot q^{-(1-R)Dn} \\
&= \left(\ell^{-1}\cdot q^{1-\eps D}\right)^n = \ell^{(-1+t(1-\eps D))n},
\end{align*}
where, the penultimate equality is due to substituting $1-\log_q(\ell)-\eps$ for $R$. Finally, since $t\ge 2$ and $D < \frac 1{2\eps}$, the right-hand side of the above is $\ell^{\Omega(n)}$.
 Thus, in expectation, $\cC$ contains many ``bad" lists for list-recovery. 

\vspace{.3cm}
Of course, this back-of-the-envelope calculation does \em not \em yield the result advertised above.  It might be the case that, even though the expected number of $M \in \mathcal{M}$ so that $M \subseteq \cC$ is large, the probability that such an $M$ exists is still small.  In fact, as \cite{MRRSW} shows, there are simple examples where this does happen. 
Thus, proving that $\cM$ is abundant requires more work.

A standard approach to show that $\cM$ is abundant would be via the second-moment method.  
Recently, \cite{MRRSW} gave a general theorem which encompasses second-moment calculations in this context.  In particular, they showed that there is essentially only one reason that a set $\cM$ might not be abundant: 
there exists some matrix $A \in \F_q^{D \times D'}$, such that the set $\{MA\mid M\in \cM\}$ is small. If this occurs, we say that $\mathcal{M}$ is \em implicitly rare.\em%
\footnote{The term ``implicitly rare'' is used by the first version of \cite{MRRSW}, available at \url{https://arxiv.org/abs/1909.06430v1}.}  They used this result to study the list-decodability of random Low-Density Parity-Check codes, but we can use their result to do our second moment calculation.
We show that our example of $\mathcal{M}$ above\footnote{More precisely, we study an example similar to this one; the example above was slightly tweaked to simplify the exposition for this back-of-the-envelope explanation.} is not implicitly rare, by showing that there is no such linear map $A$. This establishes that the back-of-the-envelope calculation is in fact correct.
Appealing to the machinery of \cite{MRRSW}, rather than applying the second moment method from scratch, allows us to get tighter constants with slightly less work, and gives a more principled approach to our lower bounds; indeed, our result (2) follows the same outline.

The intuition for our second result (2) is similar: we give an example of a class $\mathcal{M}$ which is \emph{bad} for list-decoding and \em abundant. \em 
We define $\cM$ as follows: Let 
$u\in \F_q^D$ be a random vector with independent $\Bernoulli_q(p)$ entries, namely, each entry is $0$ with probability $1-p$, and chosen uniformly from $\F_q^*$ with probability $p$. Let $x$ be uniformly sampled from $\F_q$. Let $\tau$ denote the distribution (over $\F_q^D$) of the random vector $u+x\cdot 1_D$. Finally, define $\cM$ to be the set of all matrices $M\in \F_q^{n\times D}$, such that a uniformly sampled row of $M$ has the distribution $\tau$.
As before, we show that $\cM$ is abundant by
showing that $\mathcal{M}$ is not implicitly rare and using the result of \cite{MRRSW}.

\paragraph{Upper bounds.}
Our argument for our upper bound result (3) closely follows that of~\cite{LiWootters}, which itself builds on the argument of~\cite{GuruswamiHSZ02}. The argument imagines building the random linear code one dimension at a time and uses a potential function to show that, so long as we do not add too many dimensions, no ball intersects the code too much. We now provide an informal overview of our approach, specifically comparing and contrasting it with the arguments of Guruswami, H{\aa}stad, Sudan and Zuckerman~\cite{GuruswamiHSZ02}; and Li and Wootters~\cite{LiWootters}. 

Let $R = 1-h(\rho)-\eps$ and put $k:=Rn$ (which we assume for exposition is an integer). Note that sampling a random linear code of rate $R$ is the same as sampling $b_1,\dots,b_k \in \F_2^n$ independently and uniformly at random and outputting $\mathrm{span}\{b_1,\dots,b_k\}$. Consider the ``intermediate'' codes $\cC_i = \mathrm{span}\{b_1,\dots,b_i\}$; \cite{LiWootters} (following \cite{GuruswamiHSZ02}) define a potential function $S_{\cC_i}$ and endeavor to show that $S_{\cC_i}$ does not grow too quickly. The work \cite{GuruswamiHSZ02} demonstrated that this holds in expectation; the work \cite{LiWootters} improved their argument to show that it holds with high probability. In both cases the potential function is such that it is easy to show that, so long as $S_{\cC}$ is $O(1)$, the code $\cC$ is list-decodable. 

The potential function in these works keeps track of the radius $p$ list-size at each vector $x \in \F_2^n$, that is, the cardinalities $|\{c \in \cC_i:\delta(x,c) \leq p\}|$ for $i=1,\dots,k$, and shows that so long as $i$ is not too large all these cardinalities remain at most $L$. For average-radius list-decoding, we instead keep track of a sort of ``weighted'' list size, where codewords that are very close $x$ are weighted more heavily. We can reuse much of the analysis from \cite{LiWootters} to demonstrate that on the $k$-th step the potential function is still bounded by a constant (in fact, it is at most $2$). The real novelty in our argument is a demonstration that, assuming this potential function is small, the code is indeed $(p,L)$-average-radius list-decodable.  This step is more involved than the argument in \cite{GuruswamiHSZ02,LiWootters} to establish $(p,L)$-list-decodability. 

\subsection{Related work} \label{sec:related-work}

We now highlight some related work. 
In what follows, $\eps$ is always the ``gap-to-capacity'', i.e., if the capacity for a particular problem is $R^*$, then the result concerns codes of rate $R^*-\eps$. 

\paragraph{Lower bounds for list sizes of arbitrary codes.} It is known that a typical (i.e., uniformly random) list-decodable code of rate $R^* - \eps$ has list size $L = \Theta(1/\eps)$, and a natural question to ask is whether \emph{every} code requires a list of size $L = \Omega(1/\eps)$. Blinovsky (\cite{Blinovsky86,blinovsky2005code}) 
showed that lists of size $\Omega_p(\log(1/\eps))$ are necessary for list-decoding a code of rate $R^* - \eps$. Later, Guruswami and Vadhan~\cite{GuruswamiV05} considered the high-noise regime where $p=1-1/q-\eta$ and showed that lists of size $\Omega_q(1/\eta^2)$ are necessary. Finally, Guruswami and Narayanan~\cite{GuruswamiN14} showed that for \emph{average-radius} list-decoding, the list size must be $\Omega_p(1/\sqrt{\eps})$.  

\paragraph{Existing lower bounds for random linear codes.}

For the special case of random linear codes, Guruswami and Narayanan~\cite{GuruswamiN14} showed that lists of size $c_{p,q}/\eps$ are necessary. The constant $c_{p,q}$ is not explicitly computed (and in fact relies on a constant from \cite{GHK11} which we discuss below), but one can deduce from the proof that if $p$ tends to $1-1/q$ then $c_{p,q}$ will tend to 0. 
Their lower bound follows from a second moment method argument, i.e., they consider a certain random variable $X$ whose positivity is equivalent to the failure of a random linear code to be list-decodable, and then show that $\Var{X} = o(\E{X})^2$. 
In this sense our approach is similar to theirs, because we rely on results from \cite{MRRSW} which themselves are proved using a second moment method.  However, we are able to get stronger results (in the sense that our leading constant does not decay as $p \to 1 - 1/q$, and moreover is optimal for binary codes).  One of the reasons may be the notion of ``implicit rareness'' from \cite{MRRSW}, which provides a useful characterization of the lists contained in a random linear code.

The work \cite{GuruswamiN14} also established lower bounds on list-decoding random linear codes from \em erasures. \em  While we do not discuss list-decoding from erasures in this work (except in the sense that erasure list-recovery is a generalization of list-decoding from erasures), this result is relevant to our work because \cite{GuruswamiN14} established an exponential lower bound of the form $L \geq \mathrm{exp}(\Omega(1/\eps))$, in contrast to the list size $O(1/\eps)$ that is attained by uniformly random codes.  Thus, our results suggest that (zero-error) list-recovery behaves more like list-decoding from erasures than from errors, at least with respect to the list size of random (linear) codes.

\paragraph{Existing upper bounds for random linear codes.}
We now turn our attention to upper bounds on list sizes for random linear codes. A long line of works~\cite{ZyablovP81,GuruswamiHSZ02, GHK11,CheraghchiGV13, Wootters13, RudraW14a, RudraW18, LiWootters} has studied this problem, and we highlight the most relevant results now. 
Zyablov and Pinsker~\cite{ZyablovP81} showed that random linear codes of rate $R^* - \eps$ have lists of size at most $q^{1/\eps}$.\footnote{For list-recovery with input lists of size $\ell$, the argument of \cite{ZyablovP81} shows that the list size is at most $q^{\ell/\eps}$. Furthermore, their results for list-decoding also apply to average-radius list-decoding.} 
Guruswami, H{\aa}stad, Sudan and Zuckerman~\cite{GuruswamiHSZ02} first showed the existence of capacity-achieving binary linear codes with lists of size $O(1/\eps)$.
Li and Wootters~\cite{LiWootters} revisited their techniques and showed that in fact random linear codes of rate $R^* - \eps$ have lists of size $O(1/\eps)$ with high probability; moreover they computed the constant coefficient in the big-Oh notation.  However, neither of these results apply to either average-radius list-decoding or to list-recovery.
As discussed above in Section~\ref{sec:techniques}, our new upper bound is the result of an improvement of the techniques of \cite{LiWootters}, which extends their result to average-radius list-decoding.

As for larger alphabets, Guruswami, H{\aa}stad and Kopparty~\cite{GHK11} showed that there exists a constant $C_{p,q}$ for which random linear codes are $(p,C_{p,q}/\eps)$-list-decodable with high probability. Unfortunately, if $p$ tends to $1-1/q$ then this constant tends to infinity. To address this, an ongoing line of works~\cite{CheraghchiGV13,Wootters13,RudraW14,RudraW18} has studied the list-decodability of random linear codes in the ``high-noise regime'' where $p$ is close to $1 - 1/q$; these results also apply to average-radius list-decodability.  These results imply that for binary random linear codes, when $p = 1 - 1/q - \Theta(\sqrt{\eps})$, random linear codes with rate $R^* - \eps$ are average-radius list-decodable with list sizes $O(1/\eps)$.  However, the constant hiding in the big-Oh is not correct (in particular, the authors do not see how to make it smaller than $2$).  Moreover, these results only hold in a particular parameter regime for $p$ and $\eps$, and degrade as the alphabet size grows.

As for list-recovery, a result by Rudra and Wootters~\cite{RudraW18} guarantees that random linear codes with rate $R^* - \eps$ over sufficiently large alphabets $\F_q$ have lists of sizes at most $(q\ell)^{O(\log(\ell)/\eps)}$.  To the best of our knowledge, no lower bounds were known.

\paragraph{Relevant results for other ensembles of codes.}

Lastly, we discuss some other results concerning other code ensembles. First of all, recent work of \cite{MRRSW} shows that a random code from Gallager's ensemble of LDPC codes~\cite{Gal62} achieves list-decoding capacity with high probability.  More generally, they show that random LDPC codes have similar combinatorial properties to random linear codes, including list-decoding, average-radius list-decoding, and list-recovery.  As part of their approach, they develop techniques to characterize the lists that appear in a random linear code with high probability, which we utilize for our work.

Finally, we note that there are no known explicit constructions of list-decodable codes of rate $R^* - \eps$ which achieve a list size even of $O(1/\eps)$.  
Over large alphabets,
the best 
explicit constructions of capacity-achieving list-decodable or list-recoverable codes have list sizes at least $(1/\eps)^{\Omega(1/\eps)}$ (e.g., \cite{KoppartyRSW18,KRRSS19}).
Further, if one insists on \emph{binary} codes, or even codes over alphabets of size independent of $\eps$, we do not know of \em any \em explicit constructions of list-decodable codes with rate approaching $R^*$.

\paragraph{Two-point concentration.} We showed that the optimal list size $L$ of a random linear code is concentrated on at most three values for both list-decoding and average-radius list-decoding: $\floor{h(p)/\eps} + 2, \floor{h(p)/\eps} + 1$, and, if the value is different, $\floor{h(p)/\eps+0.99}$.

In \cite[Theorem 2.5]{LiWootters}, it was also shown that the optimal list size of a \em completely \em random binary code is concentrated on two or three values for list-decoding.
This type of concentration is also well studied in graph theory, where it is known that in Erd\H{o}s-R\'enyi graphs, a number of graph parameters are concentrated on two values.
Examples include the clique number (size of the largest clique) \cite{matula1972employee,bollobas1976cliques}, the chromatic number \cite{luczak1991note, alon1997concentration, achlioptas2005two}, and the diameter \cite{riordan2010diameter}.

\subsection{Discussion and open problems}

In this work, we have made progress on pinning down the output list sizes for (average-radius) list-decoding and list-recovery for random linear codes.  Before we continue with the technical portion of the paper, we highlight some open questions and future directions.

\begin{itemize}
\item We showed that random linear codes of rate $1-h_q(p)-\eps$ are not $(p,L)$-list-decodable for $L\sim\frac{h_q(p)}{\epsilon}$. We conjecture this lower bound is tight, i.e. that random linear codes of rate $1-h_q(p)-\eps$ are $(p,L)$-(average-radius) list-decodable for $L=\frac{h_q(p)}{\epsilon}(1+o(1))$, where the $o(1)\to 0$ as $\eps\to 0$. Our Theorem~\ref{thm:main-avg-rad} (and earlier in \cite{LiWootters} for list-decoding) shows it is true for $q=2$, and we conjecture this is true for larger $q$.

\item Our results show that list-decoding and average-radius list-decoding have essentially the same output list sizes over binary alphabets, for random linear codes.  It would be interesting to extend this to larger alphabets, or even to more general families of codes.  This is especially interesting given that there is an exponential gap in the best known lower bounds (on the list-size for arbitrary codes) between list-decoding and average-radius list-decoding for general codes.

\item We have used different techniques for our upper and lower bounds.  However, we think it is an interesting direction to use the characterization of \cite{MRRSW}---which we used to prove our lower bounds---to prove upper bounds as well.  This would entail showing that every sufficiently bad list is implicitly rare. 

\item Finally, we note that our lower bounds for list-recovery rely on the field $\F_q$ being an extension field (that is, $q = p^t$ for some $t > 1$).  
It is an interesting question whether or not an exponential lower bound also holds over prime fields.  
We note that other lower bounds on list-decoding and list-recovery for Reed-Solomon codes also apply only to extension fields~\cite{GR05,BKR09}; perhaps all of these bounds taken together are evidence that better list-decodability may be possible in general over prime fields. 
\end{itemize}

\subsection{Organization}

In Section~\ref{sec:prelims} we set up notation and formally state the results of \cite{MRRSW} that we build on for our lower bounds.  In Section~\ref{sec:list_rec} we state and prove our lower bound on list-recovery of random linear codes.   In Section~\ref{sec:list_dec} we state and prove our lower bound on the list-decodability of random linear codes.   In Section~\ref{sec:upper_bd} we prove our upper bound on the list-decodability of random linear codes.

\section{Preliminaries}\label{sec:prelims}
In this section, we set notation and introduce the notions and results from \cite{MRRSW} that we need for our lower bounds.

\paragraph{Notation.}
Unless otherwise specified, all logarithms are base $2$.  We use the notation $\exp{x}$ to mean $e^x$. 
For an integer $a$, we define $[a]:=\{1,\dots,a\}$.
For a vector $x\in \mathbb{F}_q^A$ and $I\subset [A]$, we use $x_I \in \mathbb{F}_q^{|I|}$ to denote the vector $(x_i)_{i\in I}$ with coordinates from $I$ in increasing order.  We use $\mathbf{1}_D$ to denote the all ones vector of length $D$.
For vectors $v$ and $w$, let $\Delta_H(v,w)$ denote the Hamming distance between $v$ and $w$, i.e., the number of coordinates on which they disagree.

We use several notions from information theory.  
Define the $q$-ary entropy $h_q:[0,1]\to[0,1]$ by
\begin{align}
  h_q(x) \defeq x\log_q(q-1) - x\log_q x - (1-x)\log_q(1-x)
\label{eq:hq}
\end{align}
We assume $q=2$ if $q$ is omitted from the subscript.

For a random variable $X$ with domain $\mathcal{X}$, we use $H(X)$ to denote the \emph{entropy} of $X$:
\[
  H(X) = -\sum_{x\in\mathcal{X}}^{} \Pr_X(x)\log(\Pr_X(x)).
\]
For a probability distribution $\tau$, we may also use $H(\tau)$ to denote the entropy of a random variable with distribution $\tau$.

Let $X$ be a random variable supported on $\mathcal{X}$ and $Y$ be a random variable supported on $\mathcal{Y}$.
We define the \emph{conditional entropy} of $Y$ given $X$ as
\[
  H(Y|X) = - \sum_{x\in \mathcal{X},y\in \mathcal{Y}}^{} p(x,y)\log\frac{p(x,y)}{p(x)}.
\]
It is easy to check that $H(X)-H(X|Y) = H(Y)-H(Y|X)$ and we call this the \emph{mutual information} $I(X; Y)$: 
\[
  I(X; Y) = H(X) - H(X|Y) = H(Y)-H(Y|X)
\]
For random variables $X,Y,Z$, we define the \emph{conditional mutual information} $I(X; Y| Z)$ by
\[
I(X; Y|Z) = H(X|Z) - H(X|Y,Z) = H(Y|Z) - H(Y|X,Z)
\]
Conditional entropy, mutual information, and conditional mutual information satisfy the \em data processing inequality: \em for any function $f$ supported on the domain of $Y$, we have
\[
H(X|f(Y)) \ge H(X|Y)\quad\text{and}\quad 
I(X; Y) \ge I(X; f(Y)) \quad\text{and}\quad 
I(X; Y| Z)\ge I(X; f(Y)|Z).
\]
We also use \emph{Fano's inequality}, which states that if $X$ is a random variable supported on $\mathcal{X}$ and $Y$ is a random variable supported on $\mathcal{Y}$, and if $f:\mathcal{Y}\to \mathcal{X}$ is a function and $p_{err} = \Pr_{X,Y}[f(Y)\neq X]$
\[
H(X|Y)\le h(p_{err}) + p_{err}\cdot \log(|\mathcal{X}|-1)
\]

We define 
\[ 
H_q(X) \defeq \frac{H(X)}{\log q},\qquad I_q(X; Y) = \frac{I(X; Y)}{\log q}. 
\]
and similarly for conditional entropy and conditional mutual information.

For a distribution $\tau$ on $\F_q^L$ and a matrix $A \in \F_q^{L' \times L}$, we define the distribution $A\tau$ on $\F_q^{L'}$ in the natural way by
\[ \Pr_{A\tau}(x) = \sum_{\{y \in \F_q^L \suchthat Ay = x\}} \Pr_\tau(y), \]
namely, $A\tau$ is the distribution of the random vector $Ay$, where $y\sim \tau$. 

We have defined list-decoding, average-radius list-decoding, and list-recovery in the introduction.  We will in fact consider a more general version of list-recovery, which also tolerates erasures:
\begin{definition}[List-recovery from erasures]\label{def:ListRecoveryErasures}
  A code $C\subset \F_q^n$ is \emph{$(\alpha,\ell,L)$-list-recoverable from erasures} if the following holds.  
Let $S_1, \ldots, S_n \subset \F_q$ be lists so that $|S_i| \leq \ell$ for at least $\alpha n$ values of $i$.  Then 
\[ \inabs{ \inset{ c \in \cC \suchthat \forall i \in [n], c_i \in S_i } } < L. \]
We take $\alpha=1$ if it is omitted.
\end{definition} 

\paragraph{Tools from \cite{MRRSW}.}
As discussed in Section~\ref{sec:techniques}, for our lower bounds we use tools from the recent work \cite{MRRSW}.  We work with matrices $M \in \F_q^{n \times L}$ ($L\in \N$), where we view the columns of $M$ as potential codewords in $\cC$.  We use the notation ``$M \subseteq \cC$'' to mean that the columns of $M$ are all contained in $\cC$.

We group together sets of such matrices $M$ according to their row distribution.  

\begin{definition}[$\tau_M$, $\dim(\tau)$, $\cM_{n,\tau}$]
Given a matrix $M \in \F_q^{n\times L}$, the empirical row distribution defined by the rows of $M$ over $\F_q^{L}$ is called the type $\tau_M$ of $M$.  That is, $\tau_M$ is the distribution so that for $v \in \F_q^L$,
\[ \Pr_{\tau_M}(v) = \frac{ \inabs{ \inset{ i \suchthat \text{the $i$'th row of $M$ is equal to $v$} } } }{n}. \]
For a distribution $\tau$ on $\F_q^L$, we use $\dim(\tau)$ to refer to $\dim(\spn(\supp(\tau)))$. 
We use $\cM_{n,\tau}$ to refer to the set of all matrices in $\F_q^{n \times L}$ which have empirical row distribution $\tau$.
\end{definition}

\begin{remark}\label{rem:nitpick}
  We remark that for some distributions $\tau$ over $\F_q^L$, the set $\cM_{n,\tau}$ may be empty due to $n\cdot \Pr_\tau(v)$ not being an integer.
  For such $\tau$ we can define $\cM_{n, \tau}$ to consist of matrices $M$ with either $\floor{n\cdot\Pr_\tau(v)}$ or $\ceil{n\cdot \Pr_\tau(v)}$ copies of $v$.
  This has a negligible effect on the analysis as we always take $n$ to be sufficiently large compared to other parameters, so for clarity of exposition we ignore this technicality.
\end{remark}

Given $M\in \cM_{n,\tau}$, note that $\cM_{n,\tau}$ consists exactly of those matrices obtained by permuting the rows of $M$. 
In particular, since the \emph{random linear code} model is invariant to such permutations, all of the matrices in $\cM_{n,\tau}$ have the same probability of being contained in $\cC$. 

As discussed in Section \ref{sec:techniques}, we prove a lower bound by exhibiting a distributions $\tau$ over $\F_q^L$ such that the corresponding set $\cM_{n,\tau}$ is both \emph{bad} and \emph{abundant}. When $\cM_{n,\tau}$ satisfies these properties, we say that $\tau$ itself is, respectively, \emph{bad} and \emph{abundant}.

The work \cite{MRRSW} characterizes which distributions $\tau$ satisfy the \emph{abundance} property, namely, which classes $\cM_{n,\tau}$ are likely to have at least one of their elements appear (as a matrix) in a random linear code of a given rate.  To motivate the definition below, suppose that the distribution $\tau$ has low entropy: $H_q(\tau) < \gamma \cdot \dim(\tau)$ for some $\gamma \in (0,1)$.  This implies that the class $\cM_{n,\tau}$ is not too big: more precisely, it is not hard to see that  
$|\cM_{n,\tau}| \le q^{H_q(\tau)\cdot n}\le q^{\gamma \dim(\tau) n}$.
Using a calculation like we did in Section~\ref{sec:techniques}, we see that, since $\cM_{n,\tau}$ is not very large, it is unlikely for a random linear code of rate less than $1 -\gamma$  to contain a matrix from $\cM_{n,\tau}$. 

However, this is not the only reason that $\cM_{n,\tau}$ might be unlikely to appear in a random linear code.  As is shown in \cite{MRRSW}, it could also be because a random output of $\tau$, subject to some linear transformation (perhaps to a space of smaller dimension), has low entropy. We call such distributions \em implicitly rare: \em

\begin{definition}[$\gamma$-implicitly rare] \label{def:implicitly-rare}
We say that a distribution $\tau $over $\F_q^{L}$ is \em $\gamma$-implicitly rare \em if there exists a full-rank linear transformation $A: \F_q^{L} \to \F_{q}^{L'}$ where $L' \le L $ such that

\[H_q(A\tau) < \gamma \cdot \dim(A\tau)\] 

\end{definition}
Observe that by taking $A$ to be the identity map, we recover the case where $\tau$ itself has low entropy. Furthermore, note that every matrix in $\cM_{n,A\tau}$ has all of its columns contained in the column-span of some matrix in $\cM_{n,\tau}$. This implies that if no matrix in $\cM_{n,A\tau}$ lies in a code, then no matrix in $\cM_{n,\tau}$ lies in the code. Thus, abundance of the distribution $A\tau$ implies abundance of $\tau$. 

For an illustrative example of an implicitly rare distribution, we refer the reader to \cite[Example 2.5]{MRRSW}. Specifically, the example provides a case where for some full-rank matrix $A$, we have $H_q(A\tau)/\dim(A\tau) > H_q(\tau)/\dim(\tau)$.

Essentially, \cite{MRRSW} shows that a row distribution $\tau$ is likely to appear in a random linear code (namely, $\tau$ satisfies the abundance property) if and only if it is \emph{not} implicitly rare. 
The following theorem follows from Lemma 2.7 in \cite{MRRSW}.\footnote{This is also given as Theorem 2.2 in the first version of \cite{MRRSW}, available at \url{https://arxiv.org/abs/1909.06430v1}.}

\begin{theorem}[Follows from Lemma 2.7 in \cite{MRRSW}]\label{thm:technical}
Let $R \in (0,1)$ and fix $\eta > 0$.
Let $\tau$ be a $(1 - R - \eta)$-implicitly rare distribution over $\F_q^L$ ($L\in \N$), and let $\cC$ be a random linear code of rate $R$. 
Then 
\[\Pr[\exists M \in \cM_{n,\tau}: M \subseteq \cC] \le q^{-\eta n}\]
Conversely, suppose that $\tau$ is not $(1 - R + \eta)$-implicitly rare. Then
\[\Pr[\exists M \in \cM_{n,\tau}: M \subseteq \cC] \ge 1- n^{O_{L,q}(1)} \cdot q^{-\eta n} \ . \]
\end{theorem}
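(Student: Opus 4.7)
The plan is to prove the two implications separately: the first, that implicit rareness forces the inclusion probability to be small, via a first-moment argument; and the converse, that absence of implicit rareness forces the inclusion probability close to one, via a second-moment argument. The second direction will be the main obstacle.

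For the upper bound, I would use the structural reduction that if $M \in \cM_{n,\tau}$ satisfies $M \subseteq \cC$, then for any linear map $A\colon \F_q^L \to \F_q^{L'}$ the matrix $MA^T$ lies in $\cM_{n,A\tau}$ and satisfies $MA^T \subseteq \cC$, since each of its columns is an $\F_q$-linear combination of columns of $M$. Taking $A$ to witness $(1-R-\eta)$-implicit rareness of $\tau$, it suffices to bound $\Pr[\exists M' \in \cM_{n, A\tau}: M' \subseteq \cC]$. By Markov's inequality this is at most $|\cM_{n, A\tau}| \cdot q^{-(1-R)\dim(A\tau) n}$; type counting gives $|\cM_{n,A\tau}| \le q^{H_q(A\tau)n}$, and the random-kernel model of $\cC$ gives $\Pr[M' \subseteq \cC] = q^{-(1-R)\dim(A\tau)n}$ since the column space of $M'$ has dimension $\dim(A\tau)$ and each kernel constraint costs a factor of $q^{-(1-R)n}$. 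Combining with $H_q(A\tau) < (1-R-\eta)\dim(A\tau)$ yields the required bound $q^{-\eta n}$.

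For the lower bound, set $X = |\{M \in \cM_{n,\tau}: M \subseteq \cC\}|$; by the Paley--Zygmund inequality it suffices to prove $\EE[X^2] \le (1 + n^{O_{L,q}(1)} q^{-\eta n}) \EE[X]^2$. Writing $\EE[X^2] = \sum_{M_1, M_2} \Pr[[M_1 \mid M_2] \subseteq \cC]$ and partitioning pairs by their joint row distribution $\sigma$ on $\F_q^{2L}$ (forced to have both marginals equal to $\tau$), the contribution of each type $\sigma$ is, up to polynomial factors, $q^{(H_q(\sigma) - (1-R)\dim(\sigma))n}$. The product coupling $\sigma = \tau \otimes \tau$ contributes exactly $\EE[X]^2$, so the task reduces to showing every other $\sigma$ contributes at most $q^{-\eta n}\EE[X]^2$, and then union-bounding over the $n^{O_{L,q}(1)}$ possible joint types.

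The main obstacle is this per-coupling bound. Writing $(U, V) \sim \sigma$, one has $H_q(\sigma) = 2H_q(\tau) - I_q(U;V)$, so the exponent gap between $\sigma$ and the product coupling equals $I_q(U;V) - (1-R)D_\sigma$, where $D_\sigma := 2\dim(\tau) - \dim(\sigma)$ counts the independent linear constraints cutting $\spn(\supp(\sigma))$ out of $\spn(\supp(\tau)) \times \spn(\supp(\tau))$. The key step is to extract from those constraints a full-rank linear map $A\colon\F_q^L \to \F_q^{L'}$ with $\dim(A\tau) = D_\sigma$ such that $AU$ is a deterministic function of $V$, whence $I_q(U;V) \ge H_q(A\tau)$. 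The non-implicit-rareness assumption $H_q(A\tau) \ge (1-R+\eta)\dim(A\tau)$ then forces the gap to be at least $\eta D_\sigma \ge \eta$, as required.
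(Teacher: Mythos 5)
Your plan follows the same route the paper itself attributes to \cite{MRRSW} (the paper does not prove Theorem~\ref{thm:technical}; it imports it from Lemma 2.7 there, describing it as a first-moment bound plus a second-moment bound). Your forward direction is correct: pushing a hypothetical $M\subseteq\cC$ through the witnessing map $A$ to get $MA^T\in\cM_{n,A\tau}$ with $MA^T\subseteq\cC$, then union-bounding with $|\cM_{n,A\tau}|\le q^{H_q(A\tau)n}$ and $\Pr[M'\subseteq\cC]=q^{-(1-R)\dim(A\tau)n}$, gives $q^{-\eta\dim(A\tau)n}\le q^{-\eta n}$. Your ``key step'' in the converse is also sound: because both marginals of $\sigma$ are $\tau$, the span of $\supp(\sigma)$ projects onto $\spn(\supp(\tau))$ in each block, so the $D_\sigma$ independent annihilating functionals have linearly independent first components; extending them off $\spn(\supp(\tau))$ yields a full-rank $A$ with $\dim(A\tau)=D_\sigma$ and $AU$ determined by $V$, whence $I_q(U;V)\ge H_q(A\tau)\ge(1-R+\eta)D_\sigma$.

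The gap is in your reduction ``the product coupling contributes exactly $\EE[X]^2$, so it suffices that every other $\sigma$ contributes at most $q^{-\eta n}\EE[X]^2$.'' That per-type claim is false, and your own estimate signals it: you only get a gap of $\eta D_\sigma$, which is vacuous when $D_\sigma=0$, and non-product types with $D_\sigma=0$ carry essentially all of the second moment. Indeed, a uniformly random pair from $\cM_{n,\tau}^2$ has joint type exactly $\tau\otimes\tau$ with at most polynomially small probability (the product type need not even be realizable with denominator $n$), so the bulk of pairs have near-product joint types with $\dim(\sigma)=2\dim(\tau)$ and mutual information $O(\log n/n)$; if each such type contributed at most $q^{-\eta n}\EE[X]^2$, then summing the $n^{O_{L,q}(1)}$ types would give $\EE[X^2]<\EE[X]^2$, contradicting Jensen. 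The repair is to split by $D_\sigma$ rather than ``product vs.\ non-product'': whenever $D_\sigma=0$ the matrix $[M_1\mid M_2]$ has rank exactly $2\dim(\tau)$, so $\Pr[[M_1\mid M_2]\subseteq\cC]=\Pr[M_1\subseteq\cC]\cdot\Pr[M_2\subseteq\cC]$ and all such types \emph{together} contribute at most $\EE[X]^2$; your extraction argument is then applied only to the types with $D_\sigma\ge1$, each contributing at most $n^{O_{L,q}(1)}q^{-\eta n}\EE[X]^2$ (here you also need the standard lower bound $|\cM_{n,\tau}|\ge q^{H_q(\tau)n}n^{-O_{L,q}(1)}$ to compare against $\EE[X]^2$). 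With that regrouping, $\EE[X^2]\le(1+n^{O_{L,q}(1)}q^{-\eta n})\EE[X]^2$ and Paley--Zygmund finishes the converse as you intended.
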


The first part of the theorem follows from a natural first-moment method argument, while the second part follows from the analogous second-moment argument. We emphasize that it is important that we allow arbitrary full-rank linear transformations $A:\F_q^L \to \F_q^{L'}$ in Definition~\ref{def:implicitly-rare}: if we only allowed $A$ to be the identity map, the second part of the theorem would be false.

\section{Lower bounds for list-recovery with erasures}\label{sec:list_rec}
 Our main result in this section is the following.

\begin{theorem}\label{thm:main_lr}
Fix $0 \le \rho < 1$. Fix a prime power $\ell \ge 2$ and an integer $t\ge 2$, and let $q=\ell^t$. Fix $0 < \varepsilon \le \frac{1-\rho}{20t}$ and let $L = \ell^{\ceil{\frac{1-\rho}{20\epsilon}}}$. 
For $n\in \N$, let $\cC \subseteq \F_q^n$ denote a random linear code of rate $R:=1-(\rho +(1-\rho)\log_q(\ell))-\varepsilon$. Then the probability of $\cC$ being $(1-\rho,\ell,L)$-erasure list-recoverable is at most $q^{-\Omega(n)}$.
\end{theorem}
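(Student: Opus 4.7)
My plan is to exhibit a distribution $\tau$ on $\F_q^D$ with $D=\ceil{(1-\rho)/(20\epsilon)}$ (so $L=\ell^D$) that is simultaneously \emph{bad}---any $M\in \cM_{n,\tau}$ contained in a linear code $\cC$ forces $\cC$ to fail $(1-\rho,\ell,L)$-erasure list-recoverability---and \emph{abundant}---$\tau$ is not $(1-R+\eta)$-implicitly rare for some $\eta=\Omega(\epsilon)$. Applying Theorem~\ref{thm:technical} then yields that a random rate-$R$ linear code contains such an $M$ with probability $1-q^{-\Omega(n)}$, completing the proof.

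Following the heuristic in Section~\ref{sec:techniques}, I take $\tau$ to be the mixture that with probability $\rho$ samples from $\Uni(\F_q^D)$ and with probability $1-\rho$ outputs $X\cdot U$, where $X\sim \Uni(\F_q^*)$ and $U\sim \Uni(\F_\ell^D)$ are independent. To verify badness, fix $M\in \cM_{n,\tau}$ of full $\F_q$-rank (a routine consequence of the erasure component of $\tau$ having full support on $\F_q^D$) and consider $B:=\inset{Mv:v\in \F_\ell^D}\subseteq \cC$, which has cardinality exactly $\ell^D=L$. Every row $M_i\in \bigcup_{x\in \F_q^*}x\F_\ell^D$---a count of at least $(1-\rho)n$ by the empirical distribution---has the form $M_i=x_iu_i$ for some $x_i\in \F_q^*$, $u_i\in \F_\ell^D$, so every element of $B$ has its $i$-th coordinate in the size-$\ell$ set $x_i\F_\ell$. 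Setting $S_i=x_i\F_\ell$ on these rows and $S_i=\F_q$ elsewhere exhibits a failure of $(1-\rho,\ell,L)$-erasure list-recoverability.

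The bulk of the work is establishing abundance: for every full-rank linear $A:\F_q^D\to \F_q^{D'}$ with $D'\le D$, I need $H_q(A\tau)\ge (1-R+\eta)\dim(A\tau)$. Surjectivity of $A$ and full support of the erasure component give $\dim(A\tau)=D'$, while concavity of entropy yields
\[
H_q(A\tau)\ \ge\ \rho D'+ (1-\rho)\,H_q(X\cdot AU).
\]
The key structural estimate is the uniform-in-$A$ lower bound $H_q(X\cdot AU)\ge D'\log_q\ell + (1-\log_q\ell) - O(1/D')$. The first summand uses that $\F_q$-surjectivity of $A$ forces $W:=A(\F_\ell^D)$ to satisfy $\dim_{\F_\ell}W\ge D'$ (since $\spn_{\F_q}W = A(\spn_{\F_q}\F_\ell^D)=\F_q^{D'}$, and $\F_q$-independent vectors are $\F_\ell$-independent), giving $H_q(AU)\ge D'\log_q\ell$. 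The second summand comes from an orbit-counting step mirroring Section~\ref{sec:techniques}: letting $w:=\dim_{\F_\ell}W$ and $m_z:=\inabs{\inset{y\in \F_q^*:yz\in W}}$, the mass of $X\cdot AU$ at nonzero $z$ is $m_z/((q-1)\ell^w)$, and the double-counting identity $\sum_{z\ne 0} m_z=(q-1)(\ell^w-1)$ bounds $\mathbb{E}[\log m_z]$ via Jensen's inequality.

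Plugging in $1-R=\rho+(1-\rho)\log_q\ell+\epsilon$, $D\le (1-\rho)/(20\epsilon)+1$, and $t\ge 2$ (so $1-\log_q\ell\ge 1/2$), the ratio $H_q(A\tau)/D'$ exceeds $1-R$ by a margin of $\Omega(\epsilon)$ uniformly in $A$; the generous constant $20$ in $D$'s definition is calibrated to absorb the $O(1/D')$ slack in the entropy estimate. Theorem~\ref{thm:technical} then delivers the desired probability bound. The main obstacle is ensuring the lower bound on $H_q(X\cdot AU)$ holds \emph{uniformly in $A$}: the cases $A=I$ and $\F_\ell$-valued $A$ reduce to routine calculations, but a general full-rank $A\in \F_q^{D'\times D}$ can produce $\F_\ell$-images $W$ of nontrivial shape, and the orbit count $m_z$ must be controlled without any structural assumption on $A$.
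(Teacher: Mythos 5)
Your overall framework is the paper's: the same bad/abundant dichotomy, the same appeal to Theorem~\ref{thm:technical}, and in fact the same distribution --- your $\tau$ on $\F_q^D$ is exactly the distribution $\sigma$ of Definition~\ref{def:badtau} (the paper then lifts it by $G$ to an $L$-column type, but this is inessential: since $G$ has full column rank, every full-rank $B \in \F_q^{D'\times D}$ arises as $AG$, so the entropy problem you must solve is literally the same one Lemma~\ref{lem:lr} solves). Your badness step is correct: every $M\in\cM_{n,\sigma}$ has full column rank (even when $\rho=0$, since the scalar component's support contains the standard basis vectors), and using linearity of $\cC$ to expand the $D$ columns into the $\ell^D$-element list $B$ with lists $S_i=x_i\F_\ell$ on a $(1-\rho)$-fraction of coordinates is a valid, slightly more economical substitute for Proposition~\ref{prop:badtau}.

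The genuine gap is in your ``key structural estimate,'' i.e.\ the abundance step, which you yourself flag as the main obstacle but do not resolve. Lower-bounding entropy via Jensen requires an upper bound on the collision mass $\sum_{z\neq 0} m_z^2$ (equivalently on $\max_z m_z$), not merely the first-moment identity $\sum_{z\neq 0} m_z=(q-1)(\ell^w-1)$: consistent with that identity alone, the extremal configuration is $m_z=q-1$ on $\ell^w-1$ points, in which case the random scalar $X$ contributes nothing and all you can conclude is $H_q(X\cdot AU)\ge w\log_q\ell$, with no excess --- fatal, e.g., for invertible $A$, where $w=D'=D$ and you need excess roughly $D'/(20D)\approx 1/20$. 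And $m_z$ genuinely can be as large as $q-1$, since $W=A(\F_\ell^D)$ can contain $\F_q$-lines for suitable full-rank $A$ once $w>D'$; so the true statement requires trading off the surplus $\F_\ell$-dimension $w-D'$ against the partial scalar-invariance of $W$, a trade-off your sketch never makes. (Relatedly, the shape of your claimed excess, a constant minus $O(1/D')$, is not what the reduction needs: the required excess scales like $\Omega(D'/D)$, so it is harmless for small $D'$ but your bound degenerates exactly there, and conversely the constant-sized claim is what must be fought for at $D'=D$.) The paper's proof supplies precisely the missing ingredient by a different manoeuvre: extract a row basis $W=[W\ind{1}\mid W\ind{2}]$ of $AG$ with $W\ind{1}$ invertible over $\F_q$, condition on $W\ind{2}v\ind{2}$, and observe that for each pair of distinct coset representatives $\alpha_i\neq\alpha_j$ there is at most one colliding pair, which bounds $\sum_z\binom{n_z}{2}$ by $\binom{(q-1)/(\ell-1)}{2}$ and, through Lemma~\ref{lem:collision}, yields the uniform excess $D'/(10D)$ of Lemma~\ref{lem:lr}. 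Without an argument of this kind (or some other uniform control of the orbit counts against $w-D'$), your abundance claim, and hence the proof, is incomplete.
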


\subsection{Proof of Theorem~\ref{thm:main_lr}}
We will prove Theorem~\ref{thm:main_lr} below, after we build up the necessary building blocks.  
As discussed in Sections \ref{sec:techniques} and \ref{sec:prelims}, to prove Theorem \ref{thm:main_lr} we seek a distribution $\tau$ that is both is \emph{bad} and \emph{abundant}.  That is, $\cC$ should likely contains some matrix from $\cM_{n,\tau}$, and the corresponding codewords should yield a counterexample to the list-recoverability of $\cC$. 
We will describe our choice of $\tau$ in Definition~\ref{def:badtau}; we will show that it is bad in Proposition~\ref{prop:badtau}; and finally we will show that it is not implicitly rare (and hence abundant by Theorem~\ref{thm:technical}) in Lemma~\ref{lem:lr}.

Our construction of the distribution $\tau$ follows similar lines to that in Section~\ref{sec:techniques}.

\begin{definition}[The bad distribution $\tau$ for list-recovery lower bounds]\label{def:badtau}
Fix $\rho,\ell, t,q$ as in Theorem~\ref{thm:main_lr}.  Let $D \geq t$ be a positive integer.
Let $\F_\ell$ be a subfield of $\F_q$, where $q = \ell^t$ and $t \geq 2$.  Let $\alpha_1, \ldots, \alpha_{(q-1)/(\ell-1)}$ be a set so that $\alpha_i \F_\ell^*$ are disjoint cosets of $\F_\ell^*$ partitioning $\F_q^*$.  
Let $L = \ell^D$.  
Let $G \in \F_\ell^{L \times D}$ be the matrix whose rows are all of the distinct elements of $\F_\ell^D$.

Let $\sigma$ be the distribution that with probability $1 - \rho$
returns $\alpha_i u$ for $(i,u)$ uniform in $\inset{1, \ldots, \frac{ q-1}{\ell-1}} \times \F_\ell^D$; 
and with probability $\rho$
returns a uniformly random element of $\F_q^D$.

Let $\tau$ be the distribution given by $Gv$ for $v \sim \sigma$.
\end{definition}

To motivate this construction, consider first the $\rho = 0$ case. 
Now consider a matrix $M \in \F_q^{n \times L}$ that has row distribution given by $\tau$.  
If we ignore the coefficients $\alpha_i$, the columns of $M$ span a $D$-dimensional subspace of $\F_\ell^n$.  In particular, they are \em bad, \em in the sense that each coordinate of these codewords are contained in a list of size $\ell$ (namely, $\F_\ell$).  Moreover, as soon as any $D$ linearly independent columns of $M$ are contained in $\cC$, all of the columns of $M$ are contained in $\cC$; this suggests that it's relatively likely (compared to, say, a random matrix in $\F_q^{L \times n}$) that $M \subseteq \cC$.  These properties don't change when we multiply by the coefficients $\alpha_i$: each coordinate is now contained in some list $\alpha_i \F_\ell$ rather than $\F_\ell$ (notice that the fact that the $\alpha_i$ are coset representatives means that all of these possible lists are disjoint, other than zero), and it's still just as likely that $M \subseteq \cC$.  However, by throwing these multiples $\alpha_i$ into the mix, we have increased the size of $\cM_{n,\tau}$, making $\tau$ more \em abundant. \em
In particular, note that, over all choices of $(i,u)$, the value $\alpha_i u$ is distinct except when $u=0$.
Thus, $\tau$ has entropy close to the entropy of the uniform distribution on $(i,u)$, so $H_q(\tau)\approx \log_q(q\ell^{D-1}) \approx D (\log_q(\ell) + \frac{1}{D})$.
Using a similar idea, we can estimate the entropy of $A\tau$ for all matrices $A$, showing that $\tau$ is \emph{not} $\log_q(\ell) + \frac{1}{10D}$ implicitly rare, implying that it is abundant.

To generalize to the $\rho > 0$ case, the construction essentially ``frees'' a $\rho$ fraction of the coordinates relative to the $\rho  =0$ case.  This further increases the size of $\cM_{n,\tau}$ (making $\tau$ even more abundant), while still maintaining the badness property for list-recovery with a $\rho$ fraction of erasures.

\begin{proposition}[$\tau$ is bad] \label{prop:badtau}
Let $\tau$ be as in Definition~\ref{def:badtau}.  Let $\cC \subseteq \F_q^n$, and let $M \in \cM_{n,\tau}$.  If $M \subseteq \cC$, then $\cC$ is not $(1 - \rho, \ell, L)$-list-recoverable.
\end{proposition}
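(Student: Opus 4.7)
The plan is to exhibit, given any $M\in \cM_{n,\tau}$ with $M\subseteq \cC$, input lists $S_1,\ldots,S_n\subseteq \F_q$ that witness the failure of $(1-\rho,\ell,L)$-erasure-list-recoverability: the $L$ columns of $M$ will be $L$ distinct codewords of $\cC$ all consistent with these lists, and at least $(1-\rho)n$ of the lists will have size at most $\ell$. The lists will be read directly off the rows of $M$.

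A sample $r$ from $\tau$ has the form $r=Gv$ for $v\sim\sigma$. If $v$ came from the first branch of $\sigma$, then $v=\alpha_j u$ with $u\in\F_\ell^D$, whence $r=\alpha_j(Gu)\in \alpha_j\F_\ell^L$ because $G$ has $\F_\ell$-entries; in particular every entry of such a row lies in the coset $\alpha_j\F_\ell$ of cardinality $\ell$. Since $M\in\cM_{n,\tau}$ has empirical row distribution $\tau$ and the first branch of $\sigma$ carries mass $1-\rho$, at least $(1-\rho)n$ rows of $M$ lie in $\bigcup_j \alpha_j\F_\ell^L$ (modulo the rounding technicality of Remark~\ref{rem:nitpick}). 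For each such row $i$ I set $S_i:=\alpha_j\F_\ell$ for an appropriate $j$, and for every other row $i$ I set $S_i:=\F_q$, treating that coordinate as an erasure. Every column of $M$ is a codeword in $\cC$, and by construction its $i$-th entry lies in $S_i$, so consistency is immediate.

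It remains to verify that the $L$ columns of $M$ are pairwise distinct. Two columns $k\ne k'$ agree throughout $M$ iff $\langle G_k-G_{k'},v\rangle=0$ for every pre-image $v$ of a row of $M$. Since the rows of $G$ are the distinct elements of $\F_\ell^D$, the vector $G_k-G_{k'}$ is a nonzero element of $\F_\ell^D$, so some standard basis vector $e_j\in\F_\ell^D$ satisfies $\langle G_k-G_{k'},e_j\rangle\ne 0$. Because $e_j$ lies in $\supp(\sigma)$ (choosing the coset representative $\alpha=1$), and $G$ has full column rank $D$ (its rows span $\F_\ell^D$), the row $Ge_j$ has positive $\tau$-mass; for $n$ sufficiently large the empirical row distribution $\tau$ forces $Ge_j$ to appear as a row of $M$, exhibiting an index where columns $k$ and $k'$ disagree. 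Thus all $L$ columns of $M$ are distinct codewords, so $\inabs{\inset{c\in\cC\suchthat c_i\in S_i\ \forall i}}\ge L$, and $\cC$ is not $(1-\rho,\ell,L)$-erasure-list-recoverable.

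The argument is essentially a direct unpacking of how $\tau$ was engineered: the row-coset structure of $\bigcup_j\alpha_j\F_\ell^L$ translates immediately into input lists of size $\ell$, and linearity of $G$ ensures the columns inherit the coset restriction on every row. The one mildly subtle point is distinctness of the $L$ columns, which is clean because the rows of $G$ are chosen to be \emph{all} of $\F_\ell^D$---this simultaneously gives $G$ full column rank and makes the support of $\tau$ rich enough to separate every pair of columns.
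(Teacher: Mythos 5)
Your proof is correct and follows essentially the same route as the paper's: use the $L$ columns of $M$ as the bad list, take $S_i = \alpha_j\F_\ell$ on the $(1-\rho)$-fraction of rows coming from the coset branch of $\sigma$ and $S_i = \F_q$ (erasures) elsewhere. The only difference is that you explicitly verify the columns are pairwise distinct (which the paper leaves implicit), and that check is sound under the paper's convention of Remark~\ref{rem:nitpick} that $n$ is taken large enough for every atom of $\tau$ to appear as a row.
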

\begin{proof}
Suppose that $M \subseteq \cC$.  
Let $w_1, w_2, \ldots, w_n \in \F_q^L$ be the rows of $M$.
It suffices to show that there are input lists $S_1, \ldots, S_n$ so that $w_j \in S_j^L$ for all $j \in [n]$, and so that for at least $(1-\rho)n$ values of $j \in [n]$, we have $|S_j| \leq \ell$.  
Recall that each row $w_j$ of $M$ is of the form $Gv_j$ where a $(1 - \rho)$ fraction of the $v_j$ are of the form $\alpha_{i_j} \cdot u_j$ for $(i_j,u_j) \in [(q-1)/(\ell-1)] \times \F_\ell^D$, and a $\rho$ fraction of the $v_j$ are arbitrary vectors in $\F_q^D$.%
\footnote{As per Remark~\ref{rem:nitpick}, we may ignore the rounding issue that $\rho n$ may not be an integer. This is without loss of generality, as we may replace $\tau$ with a very similar distribution so that a $\lceil\rho n\rceil$ fraction of the $v_j$ are arbitrary in $\F_q^D$,
and adjust all parameters by a term that is $o(1)$ as $n \to \infty$.}

In the first case, set $S_j = \alpha_{i_j} \cdot \F_\ell$.  
Because the elements of $G$ are all in $\F_\ell$, all the coordinates of $w_j = Gv_j = \alpha_{i_j} G u_j$ lie in $S_j$.  Moreover by definition $|S_j| \leq \ell$.
In the second case, set $S_j = \F_q$.  
By definition all the coordinates of $w_j \in \F_q^L$ lie in $S_j = \F_q$. 

This completes the proof.
\end{proof}

Next, we will show that $\tau$ is not implicitly rare, which will imply that $\tau$ is abundant.

\begin{lemma}[$\tau$ is abundant]\label{lem:lr}
Let $\tau$ be as in Definition \ref{def:badtau}.  
Then $\tau$
is \emph{not} $\inparen{\rho + (1-\rho)\log_{q}(\ell) + \frac{1-\rho}{10D}}$-implicitly rare.
\end{lemma}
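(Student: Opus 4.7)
The plan is to prove, for every full-rank linear map $A\colon\F_q^L\to\F_q^{L'}$ with $L'\le L$, that $H_q(A\tau)\ge\gamma\cdot\dim(A\tau)$ where $\gamma:=\rho+(1-\rho)\log_q\ell+\tfrac{1-\rho}{10D}$. Writing $w=Gv$ for $v\sim\sigma$, we have $Aw=Bv$ with $B:=AG\in\F_q^{L'\times D}$; since $\spn_{\F_q}(\supp\sigma)=\F_q^D$, this gives $\dim(A\tau)=\rank(B)=:r$. Rank-factoring $B=B_1B_2$ with $B_1\in\F_q^{L'\times r}$ injective preserves $H_q(Bv)$, so the task reduces to showing $H_q(B_2v)\ge\gamma r$ for some $B_2\in\F_q^{r\times D}$ of full row rank.

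I then condition on which branch of $\sigma$ produced $v$. Letting $T=1$ (probability $\rho$) denote the uniform-on-$\F_q^D$ case, $B_2v\mid T{=}1$ is uniform on $\F_q^r$ by full row rank, contributing entropy $r$. Thus $H_q(B_2v)\ge\rho r+(1-\rho)H_q(Y_0)$ for $Y_0:=\alpha_IB_2U$; since $\log_q\ell=1/t$, it suffices to prove $H_q(Y_0)\ge r/t+r/(10D)$.

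The heart of the argument is an exact computation of $H_q(Y_0)$ via the $\F_\ell$-subspace $V:=B_2\F_\ell^D\subseteq\F_q^r$, of $\F_\ell$-dimension $s$. Its stabilizer $\{\beta\in\F_q:\beta V\subseteq V\}$ is an intermediate subfield $\F_{\ell^d}$ with $d\mid t$, and two cosets $\alpha_iV,\alpha_jV$ coincide iff $\alpha_i\alpha_j^{-1}\in\F_{\ell^d}^*$; hence the $q'=(q-1)/(\ell-1)$ multiples $\alpha_iV$ collapse into $(q-1)/(\ell^d-1)$ distinct cosets, each shared by exactly $(\ell^d-1)/(\ell-1)$ of the $\alpha_i$. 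Carrying out the counting yields
\[
H_q(Y_0)=\frac{s}{t}+(1-\ell^{-s})\log_q\frac{q-1}{\ell^d-1}\ \ge\ \frac{s}{t}+(1-\ell^{-s})\cdot\frac{t-d}{t},
\]
using $(q-1)/(\ell^d-1)=1+\ell^d+\cdots+\ell^{t-d}\ge\ell^{t-d}$.

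Two observations finish the proof. First, $s\ge r$: any $r$ columns of $B_2$ that are $\F_q$-independent are automatically $\F_\ell$-independent, so the $\F_\ell$-span of all $D$ columns has dimension at least $r$. Second, I case-split on $d$. If $d=t$, then $V$ is $\F_q$-closed and equals $B_2\F_q^D=\F_q^r$, forcing $s=tr$; the formula simplifies to $H_q(Y_0)=r$, which exceeds $r/t+r/(10D)$ since $t\ge 2$ and $r\le D$. If $d<t$, then $d\mid t$ forces $d\le t/2$, so $(1-\ell^{-s})(t-d)/t\ge\tfrac12\cdot\tfrac12=\tfrac14\ge r/(10D)$ (using $r\le D$), and combined with $s/t\ge r/t$ this gives at least $r/t+r/(10D)$. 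The main anticipated obstacle is correctly pinning down the collision structure among the cosets $\alpha_iV$; once the stabilizer-subfield viewpoint organizes this, the endgame is a short divisor case analysis.
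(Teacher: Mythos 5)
Your reductions are sound and parallel the paper's: passing to $B=AG$, rank-factoring to a full-row-rank $B_2$, the identity $\dim(A\tau)=\rank(AG)$, the mixture/conditioning step that isolates the $\rho$-branch and leaves the target $H_q(Y_0)\ge r/t+r/(10D)$, and the observation $s\ge r$. The gap is in the central ``exact computation'' of $H_q(Y_0)$. Your stabilizer analysis only determines when two dilates $\alpha_i V$ and $\alpha_j V$ of the $\F_\ell$-subspace $V=B_2\F_\ell^D$ are \emph{equal} (iff $\alpha_i\alpha_j^{-1}\in\F_{\ell^d}^*$); it says nothing about distinct dilates sharing nonzero points, and your formula $H_q(Y_0)=\frac{s}{t}+(1-\ell^{-s})\log_q\frac{q-1}{\ell^d-1}$ is precisely the entropy one would get if distinct dilates met only at $0$. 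Unlike cosets of a subgroup, distinct dilates of an $\F_\ell$-subspace generally intersect in large sets, and these collisions strictly decrease the entropy. Concretely, take $t=D=3$, $L'=r=1$, and $B_2=(1,\beta,0)$ with $\beta\in\F_q\setminus\F_\ell$ (realizable via $A=B_2G^{+}$ for a left inverse $G^{+}$ of $G$, and with $\rho=0$). Then $V=\F_\ell+\beta\F_\ell$ has $s=2$ and stabilizer $\F_\ell$, so $d=1$, and your formula evaluates to $\frac23+(1-\ell^{-2})\log_q(\ell^2+\ell+1)>1=r\ge H_q(Y_0)$, which is impossible: the $\ell^2+\ell+1$ distinct dilates cannot be pairwise almost-disjoint, since they would then cover $(\ell^2+\ell+1)(\ell^2-1)>q-1$ nonzero elements of $\F_q$. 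Consequently the intermediate inequality $H_q(Y_0)\ge\frac{s}{t}+(1-\ell^{-s})\frac{t-d}{t}$ is false in general, and your closing case analysis on $d$ has nothing to rest on (the lemma's conclusion still holds in this example, but not by your argument).

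This is exactly the difficulty the paper's proof is engineered to avoid: it first conditions to reduce to $W=[W\ind{1}\mid W\ind{2}]$ with $W\ind{1}$ an invertible $D'\times D'$ block, so the relevant sets are images of $\alpha_i\F_\ell^{D'}$ (which pairwise meet only at $0$) under an injective map plus a shift; this yields the crude but valid bound that each pair $i\ne j$ contributes at most one collision, which is then fed into the Jensen-type collision bound (Lemma~\ref{lem:collision}) rather than an exact entropy formula. To salvage your route you would need comparable control on the pairwise intersections $\alpha_iV\cap\alpha_jV$ for an arbitrary $V=B_2\F_\ell^D$, which is not available (in the example above these intersections have size $\ell$); reducing to an injective square block before counting collisions, as the paper does, is the natural fix.
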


The proof of Lemma~\ref{lem:lr} is in Section~\ref{sec:lemlr} below.
Before we prove Lemma~\ref{lem:lr}, we use it to prove Theorem~\ref{thm:main_lr}.

\begin{proof}[Proof of Theorem~\ref{thm:main_lr}, assuming Lemma~\ref{lem:lr}]
Let $0 < \eps \le \frac{1-\rho}{20t}$.  Let $\tau$ be as in Definition~\ref{def:badtau}, choosing $D = \left\lceil \frac{1-\rho}{20 \eps} \right\rceil$. By our choice of $\varepsilon$, we indeed have $D\ge t$. Lemma~\ref{lem:lr} shows that $\tau$ is \em not \em $(\rho + (1 -\rho)\log_q(\ell) + \frac{1-\rho}{10D})$-implicitly rare. By choice of $D$, we have $\frac{1-\rho}{10D} > \eps$. From Theorem~\ref{thm:technical} with $\eta = \frac{1-\rho}{10D} - \eps$, we see that for any sufficiently large $n$, a random code of rate
\[ \inparen{ 1 - \inparen{ \rho + (1 - \rho)\log_q(\ell) + \frac{1-\rho}{10D} } } + \eta = 1 - \inparen{ \rho + (1 - \rho) \log_q(\ell)} - \eps \]
contains $\ell^D$ codewords given by a matrix $M \in \cM_{n,\tau}$ 
with probability at least $1 - q^{\Omega(\eps n)}$.
By Proposition~\ref{prop:badtau}, if this occurs, then $\cC$ is not $(1 -\rho, \ell, L)$-list-recoverable.
\end{proof}

\subsection{Proof of Lemma~\ref{lem:lr}}\label{sec:lemlr}
In this section we prove Lemma~\ref{lem:lr}, which will complete the proof of Theorem~\ref{thm:main_lr}.
We first prove the following technical lemma, which roughly states that a distribution with few ``collisions'' has entropy close to the uniform distribution.

\begin{lemma}
\label{lem:collision}
    Let $Z$ be a finite set, and for $z\in Z$, let $n_z$ be a nonnegative integer. Suppose that $N_1 = \sum_{z\in Z} n_z$ and that $N_2 = \sum_{z\in Z}\binom{n_z}{2}$. Then the distribution that samples an element $z\in Z$ with probability $n_z/N_1$ has entropy at least
    \begin{align}
        \log(N_1) - \log\left(1 + \frac{2N_2}{N_1}\right).
    \end{align}
\end{lemma}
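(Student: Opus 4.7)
The plan is a short two-step calculation: expand the entropy, then apply Jensen's inequality. The entire argument fits in a few lines, so there is really no single hard step, but I will describe the structure I would follow.

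First I would rewrite the entropy of the sampling distribution as
\[ H = \sum_{z \in Z} \frac{n_z}{N_1} \log \frac{N_1}{n_z} = \log N_1 - \frac{1}{N_1} \sum_{z \in Z} n_z \log n_z, \]
using the convention $0 \log 0 = 0$ so that terms with $n_z = 0$ drop out harmlessly. The lemma reduces to proving that $\frac{1}{N_1} \sum_{z} n_z \log n_z \le \log\left(1 + \tfrac{2 N_2}{N_1}\right)$.

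Next I would recognize that $\frac{1}{N_1} \sum_{z} n_z \log n_z = \mathbb{E}[\log X]$, where $X$ is the random variable returning the value $n_z$ when the sample equals $z$; concretely, $\Pr[X = n_z] = n_z/N_1$ for each $z$ with $n_z > 0$. Concavity of $\log$ and Jensen's inequality then give $\mathbb{E}[\log X] \le \log \mathbb{E}[X]$, and the elementary identity $n_z^2 = n_z + 2 \binom{n_z}{2}$ yields $\mathbb{E}[X] = \frac{1}{N_1}\sum_{z} n_z^2 = 1 + \frac{2 N_2}{N_1}$. Substituting back gives the claimed lower bound on $H$.

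As a sanity check, the quantity $\log N_1 - \log(1 + 2 N_2/N_1)$ is precisely the Rényi-2 (collision) entropy $H_2$ of the sampling distribution, since $\sum_z (n_z/N_1)^2 = (N_1 + 2N_2)/N_1^2$; thus the lemma is really just an instance of the standard inequality $H \ge H_2$ between Shannon and collision entropy. The only bookkeeping needed is the $n_z = 0$ convention, which presents no real obstacle.
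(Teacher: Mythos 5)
Your proof is correct and follows essentially the same route as the paper: expand the entropy as $\log N_1 - \frac{1}{N_1}\sum_z n_z\log n_z$ and apply Jensen's inequality via concavity of $\log$, using $\sum_z n_z^2 = N_1 + 2N_2$. The observation that the bound is the collision (R\'enyi-2) entropy is a nice sanity check but does not change the argument.
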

\begin{proof}
    The entropy is
    \begin{align}
        \sum_{z\in Z}\frac{n_z}{N_1} \cdot\log\left(\frac{N_1}{n_z}\right)
        &= \log(N_1) - \sum_{z\in Z} \frac{n_z}{N_1}\cdot \log(n_z)\nonumber\\
        &\ge \log(N_1) - \log\left(\sum_{z\in Z} \frac{n_z}{N_1}\cdot n_z\right) 
        = \log(N_1) - \log\left(\frac{N_1+2N_2}{N_1}\right),
    \end{align}
    as desired. In the inequality, we used Jensen's inequality and that $\log(x)$ is concave.
\end{proof}

Next, we prove Lemma~\ref{lem:lr}.  We prove it
first for $\rho=0$, and then use the $\rho=0$ case to prove the general statement.

\begin{proof}[Proof of Lemma \ref{lem:lr} for $\rho = 0$]

Fix a matrix $A\in\mathbb{F}_q^{L'\times L}$, and let $\tau$ be as in Definition~\ref{def:badtau}.  Recall that the distribution $A \tau$ is given by $A v$ for $v \sim \tau$.\footnote{Throughout this proof, the output of $\tau$ is treated as a column vector.}
Our goal is to show that, for all $A$, the distribution $A\tau$ supported on $\mathbb{F}_q^{L'}$ 
has large entropy.

Let $D'$ be the rank of $AG\in\mathbb{F}_q^{L'\times D}$.
  
First we show that $A\tau$ has dimension $D'$.
  By definition, $\alpha e_i \in \supp(\sigma)$ for all $\alpha \in \F_q$ and all $i \in [D]$, so $\supp(\tau)$ contains
$\{G\cdot \alpha \cdot e_{i} : \alpha \in \mathbb{F}_q, i \in [D]\}$, and thus $\spn_{\mathbb{F}_q}(\supp(\tau)) = G\cdot \mathbb{F}_q^D$.
Hence,
  \begin{align}
  \dim(\spn_{\mathbb{F}_q}(\supp(A\tau)))
  &= \dim(\spn_{\mathbb{F}_q}(A\cdot \supp(\tau))) \nonumber\\
  &= \dim(A\cdot \spn_{\mathbb{F}_q}(\supp(\tau))) \nonumber\\
  &= \dim(AG\cdot \mathbb{F}_q^D) \nonumber\\
  &= D'
  \end{align}
  as desired.

  Next we show that the entropy $H_q(A\tau)$ is at least $D'(\log_q\ell + \frac{1}{10D})$.
  It suffices to prove that \[H(A\tau) \geq \log\left(\ell^{D'}\cdot q^{0.1D'/D}\right).\]
Since $AG \in \F_q^{L' \times D}$ has rank $D'$, there exist $D'$ linearly independent rows whose span contains all the rows of $AG$.
Let $W \in \F_q^{D' \times D}$ be the submatrix of $AG$ obtained by keeping these rows.
Note that for this $W$, for all $v,v' \in \F_q^D$ we have
$Wv = Wv'$ if and only if $AGv = AGv'.$

Since $W$ has rank $D'$ and $D' \leq D$, there are $D'$ linearly independent columns of $W$.  Suppose without loss of generality that they are the first $D'$ columns of $W$.
  Thus, we may write $W = [W\ind{1}|W\ind{2}]$ where $W\ind{1} \in \F_q^{D' \times D'}$ is invertible and $W\ind{2} \in \F_q^{D'\times (D-D')}$.  For any $v \in \F_q^D$, we may write $v=\begin{bmatrix}
  v\ind{1}\\v\ind{2}
  \end{bmatrix}$
  where $v\ind{1}\in\mathbb{F}_q^{D'}, v\ind{2}\in\mathbb{F}_q^{D-D'}$.
  Then (recalling $\sigma$ from Definition \ref{def:badtau})
  \begin{align}
    H(A\tau) 
    &= H(W\sigma) \nonumber\\
    &=H_{v\ind{1}\sim \mathbb{F}_\ell^{D'}, v\ind{2}\sim \mathbb{F}_\ell^{D-D'}, i\sim[(q-1)/(\ell-1)]}(\alpha_i W\ind{1}v\ind{1}+\alpha_i W\ind{2}v\ind{2}) \nonumber\\
    &\ge H_{v\ind{1}\sim \mathbb{F}_\ell^{D'}, v\ind{2}\sim \mathbb{F}_\ell^{D-D'}, i\sim[(q-1)/(\ell-1)]}(\alpha_i W\ind{1}v\ind{1}+\alpha_i W\ind{2}v\ind{2}| W\ind{2}v\ind{2}) \nonumber\\
   &= \sum_{w \in \F_\ell^{D'}} \Pr_{v\ind{2} \sim \F_\ell^{D-D'}}[ W\ind{2}v\ind{2} = w ] H_{v\ind{1} \sim \F_\ell^{D'}, i \sim[(q-1)/(\ell-1)]}( \alpha_i W\ind{1}v\ind{1} + \alpha_i w ) \nonumber\\
 &\geq \min_{w \in \F_q^{D'}}  H_{v\ind{1} \sim \F_\ell^{D'}, i \sim[(q-1)/(\ell-1)]}( \alpha_i W\ind{1}v\ind{1} + \alpha_i w ), \nonumber
  \end{align}
where above we are using the notation $H_{x \sim X}$ to denote that the randomness in the definition of the entropy is over the choice of a uniformly random $x$ in $X$.
  Thus, it suffices to show, for any fixed vector $w$, we have
  \begin{equation}\label{eq:needtoshow}
    H_{v\ind{1}\sim \mathbb{F}_\ell^{D'}, i\sim[(q-1)/(\ell-1)]}(\alpha_i W\ind{1}v\ind{1}+\alpha_i w) 
    \ge \log\left(\ell^{D'}\cdot q^{D'/10D}\right)
  \end{equation}

  Before finishing the proof, we first give some intuition for the remaining details.
  First consider the case $w=0$.
  Note that, over all choices of $v\ind{1}$ and $i$, the vectors $\alpha_i v\ind{1}$ are all distinct, except the all 0s vector.
  Thus, as $W\ind{1}$ is invertible, the distribution of $\alpha_i W\ind{1}v\ind{1}+\alpha_i w$ is close to the uniform distribution on approximately $\ell^{D'}\cdot \frac{q-1}{\ell-1}\approx \ell^{D'-1}\cdot q$ vectors, so the entropy is at least roughly $\log(\ell^{D'-1}\cdot q)$; this turns out to be enough. 
  
  When $w\neq 0$, we do not have the same near-uniform distribution, but we do have the following useful property that carries over from the $w=0$ case: for a fixed $w\in \mathbb{F}_q^{D'}$ and $\alpha_i\neq \alpha_j$, there exists at most one pair $(v,u)\in(\mathbb{F}_\ell^D)^2$ such that $\alpha_i W\ind{1} v + \alpha_i w = \alpha_j W\ind{1} u + \alpha_j w$.
  To see this, suppose for contradiction there are two, $(v,u)$ and $(v',u')$.
  Then subtracting, we have $\alpha_i W\ind{1}(v-v') = \alpha_j W\ind{1}(u-u').$
  Since $W\ind{1}$ is invertible, we have $\alpha_i(v-v') = \alpha_j(u-u')\in \alpha_i\mathbb{F}_\ell^D\cap \alpha_j\mathbb{F}_\ell^D = \{0\}$.
  Thus, $v=v'$ and $u=u'$, a contradiction.
  Using this property, we know that, over the randomness of $v\ind{1}$ and $i$, there are not many ``collisions'' in $\alpha_iW\ind{1} v\ind{1} + \alpha_i w$, so the entropy should again be close to the entropy of the uniform distribution on $(i,v\ind{1})$, which is $\log(\ell^{D'-1}\cdot q)$.
 We can bound the entropy of such a distribution with few collisions with a careful application of Jensen's inequality (Lemma~\ref{lem:collision}).
 We then show the resulting bound is sufficient by some straightforward calculations.
  We note that our bounds hold for all prime powers $\ell$ and all $t\ge 2$, rather than simply for sufficiently large $\ell$ and $t$; this requires the argument to be a little more delicate. 
  
 We now show the rest of the proof. Fix $w \in \F_q^{D'}$.  
  For $z\in \mathbb{F}_q^{D'}$, let 
\[ n_z = \inabs{\inset{ (i,v\ind{1}) \suchthat v\ind{1}\in \mathbb{F}_\ell^{D'},\alpha_i W\ind{1} v\ind{1} + \alpha_i  w = z } }. \]
Thus, we have
\begin{equation}\label{eq:sumz}
\sum_{z}^{} n_z = \frac{q-1}{\ell-1}\cdot \ell^{D'} 
\end{equation}
since there are $(q-1)/(\ell-1)$ choices for $i$ and $\ell^{D'}$ choices for $v\ind{1}$.

Further, we have that  
\begin{equation}\label{eq:binomineq}
     \sum_{z}^{} \binom{n_z}{2}  \le  \binom{(q-1)/(\ell-1)}{2}.
\end{equation}
This is true because, on one hand, the left side counts the number of pairs $(i,v),(j,u)$ so that
\[ \alpha_iW\ind{1} u + \alpha_i w = \alpha_jW\ind{1} v + \alpha_j w, \] 
by caseworking on the value $z = \alpha_iW\ind{1} u + \alpha_i w = \alpha_jW\ind{1} v + \alpha_j w$.
On the other hand, for any fixed $i$ and $j$, there is at most one such pair $(i,v)$ and $(j,u)$, so the total number of pairs is at most $\binom{(q-1)/(\ell-1)}{2}$. 

For a uniform $i$ and $v\ind{1}\sim \mathbb{F}_q^{D'}$, the vector $\alpha_iW\ind{1}v\ind{1}+\alpha_i w$ equals a vector $z\in\mathbb{F}_q^{D'}$ with probability proportional to $n_z$.
Thus, by Lemma~\ref{lem:collision} with $N_1 = \frac{q-1}{\ell-1}\cdot \ell^{D'}$ and $N_2 = \binom{(q-1)/(\ell-1)}{2}$, we have 
  \begin{align}
    H_{v\ind{1}\sim \mathbb{F}_q^{D'}, i\sim[(q-1)/(\ell-1)]}(\alpha_i W\ind{1}v\ind{1}+\alpha_i w)
    \ &\ge \ \log\left(\frac{q-1}{\ell-1}\cdot \ell^{D'}\right) -  \log \left(1 + \frac{\left(\frac{q-1}{\ell-1}\right)\cdot \left(\frac{q-1}{\ell-1}-1\right)}{\frac{q-1}{\ell-1}\cdot \ell^{D'}}\right) \nonumber\\
    \ &= \   \log\left(\frac{q-1}{\ell-1}\cdot \ell^{D'}\right) -  \log \left(1 + \frac{q-\ell}{(\ell-1)\cdot \ell^{D'}}\right). \label{eq:progress}
  \end{align}

We now show that \eqref{eq:progress} implies \eqref{eq:needtoshow}.
Recall that $q = \ell^t$.  
We have
  \begin{align}
	1 + \frac{q-\ell}{(\ell-1)\cdot \ell^{D'}}
    \ &< \ 1+\frac{q-1}{(\ell-1)\ell^{D'}}  \nonumber\\
    \ &=  \ \frac{q-1}{(\ell-1)q^{0.1D'/D}}\cdot \left( \frac{(\ell-1)q^{0.1D'/D}}{q-1} +\left( \frac{q^{0.1/D}}{\ell} \right)^{D'}\right) \nonumber\\
    \ &\le \ \frac{q-1}{(\ell-1)q^{0.1D'/D}}\cdot \left( \frac{(\ell-1)q^{0.1}}{q-1} +\left( \frac{\ell^{0.1}}{\ell} \right)^{D'}\right) \nonumber\\
    \ &\le \ \frac{q-1}{(\ell-1)q^{0.1D'/D}}\cdot \left( \frac{(\ell-1)\ell^{0.2}}{\ell^2-1} + \frac{1}{\ell^{0.9}}\right) \nonumber\\
    \ &\le \ \frac{q-1}{(\ell-1)q^{0.1D'/D}}\cdot \left( 0.4 + 0.6\right). \nonumber
\end{align}
Therefore we conclude that
\begin{align}
	1 + \frac{q-\ell}{(\ell-1)\cdot \ell^{D'}}
    \ &< \ \frac{q-1}{(\ell-1)q^{0.1D'/D}}.
    \label{eq:moreprogress}
  \end{align}
  In the first inequality we used that $q-\ell < q-1$.
  In the second inequality, we used that $D'\le D$ and that $q^{1/D}\le q^{t/D}\le \ell$ (recall $t\le D$).
  In the third inequality, we used that $\frac{x^{0.1}}{x-1}$ is decreasing for $x>1$, that $\ell^2\le q$, and that $D'\ge 1$.
  In the fourth inequality we used $\ell\ge 2$ and that $\frac{(x-1)x^{0.2}}{x^2-1}$ and $x^{-0.9}$ are decreasing for $x\ge 2$.

Combining \eqref{eq:moreprogress} with \eqref{eq:progress} proves \eqref{eq:needtoshow}, so
    \[H(A\tau)> \log(q^{0.1D'/D}\ell^{D'})\]
    which gives us
    \[H_q(A\tau)> \log_q(q^{0.1D'/D}\ell^{D'}) = D'\left( \log_q(\ell)+\frac{1}{10D} \right)\]
  These computations holds for any matrix $A$ of rank $L' \le L$, so we have that $\tau$ is \emph{not} $\inparen{\log_q(\ell)+\frac{1}{10D}}$-implicitly rare.
\end{proof}

This concludes the proof when $\rho = 0$; we continue to the case when $\rho > 0$.

\begin{proof}[Proof of Lemma \ref{lem:lr} for $\rho > 0$]
We need to show that for any $A\in\mathbb{F}_q^{L'\times L}$ such that $AG\in\mathbb{F}_q^{L'\times D}$ has rank $D'\le D$, the entropy of $H_q(A\tau)$ is at least $D'(\rho + (1-\rho)\log_q\ell + \frac{1-\rho}{10D})$. To see this first note that 
  
  \[\tau = \begin{cases} G\alpha_iv, \quad &
\text{ with probability } 1 -\rho,
\text{where }i \sim [1,\dots,(q-1)/(\ell-1)], v \sim\mathbb{F}_\ell^D \\
  				 Gw, \quad  &
\text{ with probability } \rho,
\text{where }w\sim\mathbb{F}_q^D 
\end{cases}\]

Define $\tau_1$ as the distribution of $G\alpha_iv$ and $\tau_2$ as the distribution of $Gw$. Now note that $\tau = (1-\rho)\tau_1 + \rho\tau_2$. 
Let $g(x) = -x\log x$.
Since $g(x)$ is concave, we have
\begin{align}
    H_q(A\tau)
    &=\sum_{v\in \mathbb{F}_q^L}g(\Pr_{A\tau}(v)) \nonumber\\
    &=\sum_{v\in \mathbb{F}_q^L}g\big((1-\rho)\cdot \Pr_{A\tau_1}(v)+\rho\cdot \Pr_{A\tau_2}(v)\big) \nonumber\\
    &\ge\sum_{v\in \mathbb{F}_q^L}(1-\rho)\cdot g(\Pr_{A\tau_1}(v))+\rho\cdot g(\Pr_{A\tau_2}(v)) \nonumber\\
    &= (1-\rho)\cdot H_q(A\tau_1)+ \rho\cdot  H_q(A\tau_2).
\end{align}

From the $\rho = 0$ case we already know that  
\[ H_q(A\tau_1) \ge D'\left( \log_q(\ell)+\frac{1}{10D} \right)\]
When $w$ is uniform on $\mathbb{F}_q^D$, then $AGw$ is uniformly distributed in the $\mathbb{F}_q$-span of $AG$ which has rank $D'$, so $H_q(A\tau_2) = D'$.
We thus have 
\begin{align}
H_q(A\tau) &\ge (1-\rho)H_q(A\tau_1)+ \rho H_q(A\tau_2)
\ge (1-\rho)\cdot D'\left(\log_q(\ell) + \frac{1}{10D}\right) + \rho D',
\end{align}
as desired.
\end{proof}

\section{Lower bounds for list-decoding with errors}\label{sec:list_dec}
Our main theorem in this section is the following.

\begin{theorem}\label{thm:main_ld}
Fix a prime power $q$, fix $p \in (0,1-\frac{1}{q})$, and fix $\delta \in(0,1)$.
There exists $\varepsilon_{p,q,\delta}>0$ such that for all $\varepsilon\in(0,\varepsilon_{p,q,\delta})$ and $n$ sufficiently large, a random linear code in $\mathbb{F}_q^n$ of rate $1 - h_q(p) - \eps$ is \emph{not} $\left(p, \floor{\frac{h_q(p)}{\eps}-\delta}\right)$-list-decodable with probability $1-q^{-\Omega(n)}$.
\end{theorem}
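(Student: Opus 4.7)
The plan is to exhibit a distribution $\tau$ on $\F_q^L$ that is both \emph{bad} for $(p, L)$-list-decoding and \emph{abundant}, so that Theorem~\ref{thm:technical} forces $\cC$ to contain a matrix from $\cM_{n,\tau}$ with probability $1 - q^{-\Omega(n)}$. With $L = \floor{h_q(p)/\eps - \delta}$, and mirroring the sketch in Section~\ref{sec:techniques}, I would take $\tau$ to be the distribution on $\F_q^L$ of $u + x \cdot \mathbf{1}_L$, where $u \sim \Ber_q(p)^L$ has i.i.d.\ coordinates (each equal to $0$ with probability $1-p$ and uniform on $\F_q^*$ with probability $p$), and $x \sim \Uni(\F_q)$ is independent of $u$.

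For the badness step, suppose $M \in \cM_{n,\tau}$ with $M \subseteq \cC$, and define $z \in \F_q^n$ by letting $z_i$ be any mode of the $i$th row of $M$. Since $p < 1-1/q$ implies $0$ is the unique mode of $\Ber_q(p)$, linearity of expectation gives
\[
\sum_{i,j} \mathbf{1}[M_{i,j} \ne z_i] \;=\; n \cdot \EE_{v\sim\tau}\!\left[L - \max_y |\{j : v_j = y\}|\right] \;\leq\; n \cdot \EE[|\{j : u_j \ne 0\}|] \;=\; pLn.
\]
The crucial step is to upgrade this column-average bound to a per-column bound. Since $u$ has i.i.d.\ entries, $\tau$ is invariant under permutations of the $L$ coordinates; combined with the fact that taking a mode is permutation-invariant, a change-of-variable shows that the number of errors in column $j$ equals that in column $\pi(j)$ for every $\pi \in S_L$, so all $L$ columns have identical error counts, each at most $pn$. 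Because $\supp(\tau) = \F_q^L$ distinguishes any pair of coordinates, the $L$ columns of $M$ are distinct codewords for $n$ large, giving $L$ codewords of $\cC$ within Hamming distance $pn$ of a common $z$ and certifying failure of $(p, L)$-list-decodability.

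For abundance, the plan is to show $\tau$ is not $(h_q(p) + \eps + \eta)$-implicitly rare for some $\eta = \eta(p, q, \delta) > 0$. Let $A \colon \F_q^L \to \F_q^{L'}$ be any full-rank linear map, and set $w := A\mathbf{1}_L$. When $w \ne 0$, pick a complement $\F_q^{L'} = \F_q w \oplus W'$: because $x$ is independent of $u$ and uniform on $\F_q$, the component of $A(u + x\mathbf{1}_L) = Au + xw$ along $\F_q w$ is uniform and independent of the $W'$-component (contributing a clean $+1$ to the entropy), while the $W'$-component equals $A'u$ for the full-rank map $A' := \pi_{W'} \circ A \colon \F_q^L \to \F_q^{L'-1}$, which satisfies $A'\mathbf{1}_L = 0$. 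Thus $H_q(A\tau) = H_q(A'u) + 1$. When $w = 0$, we simply have $H_q(A\tau) = H_q(Au)$, and $A$ itself satisfies $A\mathbf{1}_L = 0$ (forcing $L' \le L - 1$). In both cases the task reduces to lower-bounding $H_q(Bu)$ for a full-rank $B \colon \F_q^L \to \F_q^{L''}$ with $B\mathbf{1}_L = 0$.

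The main obstacle is sharpening this entropy bound: the naive $H_q(u \mid Bu) \le \log_q|\ker B|$ only yields $H_q(Bu) \ge L h_q(p) - (L - L'')$, which is not enough to beat the $h_q(p) + \eps$ threshold. Instead, I would observe that the posterior of $u$ given $Bu = y$ is supported on an affine coset of $\ker B$ with density proportional to $\prod_j \Pr_{\Ber_q(p)}(\cdot)$, and that under $p < 1 - 1/q$ this posterior is sharply peaked on the coset representative with the most zero coordinates; a direct computation (in the spirit of Lemma~\ref{lem:collision}) then gives $H_q(u \mid Bu) = q^{-\Omega(L)}$, so $H_q(Bu) \ge L h_q(p) - q^{-\Omega(L)}$. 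The tightest constraint arises from the $w = 0$, $L' = L - 1$ branch, where this yields
\[
\frac{H_q(A\tau)}{\dim(A\tau)} \;\geq\; \frac{L h_q(p)}{L-1} - o(1) \;=\; h_q(p) + \frac{h_q(p)}{L-1} - o(1);
\]
for $L = \floor{h_q(p)/\eps - \delta}$ and $\eps < \eps_{p,q,\delta}$ sufficiently small, $h_q(p)/(L-1) > \eps + \eta$ for some $\eta = \Omega_{p,q,\delta}(\eps^2)$, and Theorem~\ref{thm:technical} concludes the proof.
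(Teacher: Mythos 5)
Your choice of $\tau$ is exactly the paper's (Definition~\ref{def:badtau_ld}), and the badness step is fine: your mode-plus-exchangeability argument is a slightly roundabout but valid variant of the paper's proof of Proposition~\ref{prop:badtau_ld}, which simply takes the center $z=(\alpha_1,\dots,\alpha_n)$ from the decomposition of each row as $u^{(j)}+\alpha_j\mathbf{1}_L$ and gets each column within $pn$ of $z$ directly (both arguments share the same rounding caveat of Remark~\ref{rem:nitpick}). Your reduction $H_q(A\tau)=1+H_q(A'u)$ when $A\mathbf{1}_L\neq 0$ is also correct and is a clean way to extract the uniform shift.

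The gap is in the abundance step, in the claim that $H_q(u\mid Bu)=q^{-\Omega(L)}$ for every full-rank $B$ with $B\mathbf{1}_L=0$, hence $H_q(Bu)\ge Lh_q(p)-q^{-\Omega(L)}$. This is false whenever $\ker B$ contains low-weight vectors in addition to $\mathbf{1}_L$. For example, take $\ker B=\spn\{\mathbf{1}_L,e_1,\dots,e_m\}$ (rank $L''=L-m-1$): conditioned on $Bu$, the coordinates $u_1,\dots,u_m$ retain essentially their prior entropy, so $H_q(u\mid Bu)\approx m\,h_q(p)$ and $H_q(Bu)\approx (L-m)h_q(p)$, which falls short of $Lh_q(p)$ by $\Theta(L)$ when $m=\Theta(L)$. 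The peaked-posterior intuition works only in the direction of full-support kernel vectors such as $\mathbf{1}_L$, where a shift changes the zero-count by $\Theta(L)$ with high probability; a shift by a weight-$O(1)$ kernel vector changes the likelihood by only an $O(1)$ factor, so there is no concentration. Because of this, your assertion that the tightest constraint is the $w=0$, $L'=L-1$ branch is unsupported: the branches with larger, structured kernels are precisely where the work lies. The trivial bound $H_q(Bu)\ge L''h_q(p)$ (restrict to $L''$ independent columns) does rescue the final inequality only when $1-h_q(p)$ exceeds the needed excess, i.e.\ roughly when $h_q(p)<\tfrac12$; for $p$ close to $1-\tfrac1q$ one genuinely needs a per-block excess of order $\tfrac{h_q(p)}{L}$ on top of $L''h_q(p)$, uniformly over all admissible $B$. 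This is what the paper's Lemma~\ref{lem:ber-p} together with the chain-rule/partition argument in the proof of Lemma~\ref{lem:ld} provides, including a separate regime for small blocks that uses the boosted noise parameter $p^*>p$ (giving a constant entropy gain $h_q(p^*)-h_q(p)$) rather than a Fano-type pinning argument, which is too weak there. Your proposal has no substitute for this machinery, so as written the abundance half does not go through.
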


\subsection{Proof of Theorem~\ref{thm:main_ld}}

Our proof of Theorem~\ref{thm:main_ld} below follows the same outline as the proof of Theorem~\ref{thm:main_lr} above.
We first define a bad distribution $\tau$ in Definition~\ref{def:badtau_ld}; then we will show that it is bad in Proposition~\ref{prop:badtau_ld}; then we will show that it is not implicitly rare (and hence abundant by Theorem~\ref{thm:technical}) in Lemma~\ref{lem:ld}.  Finally we will prove Theorem~\ref{thm:main_ld} from these pieces.

  Below, we let $\Ber_q(p)$ be the distribution that returns $0\in\mathbb{F}_q$ with probability $1-p$ and any other element of $\mathbb{F}_q$ with probability $\frac{p}{q-1}$.

\begin{definition}[The bad distribution $\tau$ for list-decoding lower bounds]\label{def:badtau_ld}
  Let $p\in(0,1-\frac{1}{q})$ and $\delta > 0$.  Choose $L > 0$.  Define the distribution $\tau$ on $\F_q^L$ 
as the distribution of the random vector $u+\alpha \mathbf{1}_L$, where $u\sim \Ber_q(p)^L$, and $\alpha$ is sampled independently and uniformly from $\F_q$.
\end{definition}

First, we observe that $\tau$ is indeed bad, in the sense that it provides a counter-example to list-decodability.

\begin{proposition}[$\tau$ is bad]\label{prop:badtau_ld}
Let $\tau$ be as in Definition~\ref{def:badtau_ld}.  Let $\cC \subseteq \F_q^n$ and let $M \in \cM_{n,\tau}$.  If $M \subseteq \cC$, then $\cC$ is not $(p, L)$-list-decodable.
\end{proposition}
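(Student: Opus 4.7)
The plan is to produce an explicit center $z \in \F_q^n$ such that every one of the $L$ columns of $M$ lies within fractional Hamming distance $p$ of $z$. This exhibits $L$ codewords of $\cC$ inside the Hamming ball of radius $pn$ about $z$, contradicting $(p,L)$-list-decodability.

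First, I would choose the center componentwise via a ``plurality'' map $\alpha^{\star}\colon \F_q^L \to \F_q$, where $\alpha^{\star}(w)$ is the smallest $a \in \F_q$ maximizing the coordinate count $N(a;w) \defeq |\{j \in [L] : w_j = a\}|$. Writing $w_1,\dots,w_n$ for the rows of $M$, set $z_i \defeq \alpha^{\star}(w_i)$. Since $M \in \cM_{n,\tau}$, the empirical distribution of the rows equals $\tau$, so for every column index $j \in [L]$,
\begin{align}
\delta(c_j, z) \;=\; \frac{|\{i \in [n] : (w_i)_j \neq \alpha^{\star}(w_i)\}|}{n} \;=\; \Pr_{w \sim \tau}[w_j \neq \alpha^{\star}(w)],
\end{align}
where the negligible rounding from Remark~\ref{rem:nitpick} is absorbed.

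Next, I would invoke a symmetry argument to show that $\delta(c_j,z)$ actually does not depend on $j$. Both $\tau$ and $\alpha^{\star}$ are invariant under permutations of the $L$ coordinates of $\F_q^L$: for $\tau$, this follows because $u \sim \Ber_q(p)^L$ has i.i.d.\ entries and $\alpha \mathbf{1}_L$ is a constant vector; for $\alpha^{\star}$, this follows because $\alpha^{\star}(w)$ is determined by the multiset of entries of $w$. Thus $\Pr_{w \sim \tau}[w_j \neq \alpha^{\star}(w)]$ is the same for every $j$, and equals its average over $j$. Rewriting the sum $\sum_{j} \Pr_{w\sim\tau}[w_j \neq \alpha^{\star}(w)]$ as $\EE_{w \sim \tau}[L - \max_a N(a;w)]$, I would then bound $\max_a N(a;w) \geq N(\alpha;w) = L - \wt{u}$ using the latent shift $\alpha$ from the decomposition $w = u + \alpha\mathbf{1}_L$. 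Since $\EE\,\wt{u} = pL$ under $u \sim \Ber_q(p)^L$, dividing by $L$ yields $\delta(c_j,z) \leq p$ for every column, as desired.

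The main obstacle is conceptual rather than computational: promoting an \emph{average}-over-columns bound into a \emph{pointwise}-across-columns bound. Choosing $z_i$ to be a plurality of $w_i$'s entries only naively yields $\frac{1}{L}\sum_j \delta(c_j,z) \leq p$, which is an average-radius statement and does not by itself imply $(p,L)$-list-decodability failure. The symmetry argument is exactly the device that bridges this gap: coordinate-permutation-invariance of both $\tau$ and $\alpha^{\star}$ forces all $L$ column distances to coincide, so the average bound is automatically a uniform bound.
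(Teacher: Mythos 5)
Your proof is correct, but it takes a genuinely different route from the paper's. The paper does not use a plurality center at all: it writes each row of $M$ as $u\ind{j}+\alpha_j\mathbf{1}_L$ so that the empirical distribution of the pairs $(u\ind{j},\alpha_j)$ is (up to the rounding of Remark~\ref{rem:nitpick}) exactly $\Ber_q(p)^L\times\Uni(\F_q)$, takes the center to be the vector of latent shifts $z=(\alpha_1,\dots,\alpha_n)$, and observes that column $i$ differs from $z$ in row $j$ precisely when $u\ind{j}_i\neq 0$, which happens in exactly a $p$ fraction of rows; so every column is at distance exactly $pn$ from $z$, pointwise, with no averaging step needed. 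Your argument instead uses a canonical (decomposition-free) plurality center and repairs the resulting average-radius bound by exploiting exchangeability of $\tau$ together with permutation-invariance of the plurality map, which legitimately forces all $L$ column distances to coincide. What each buys: the paper's choice of center makes the per-column distance computation immediate and exact, at the cost of having to fix a row-by-row decomposition into $(u,\alpha)$ pairs (handled in a footnote, since such decompositions are not unique); your center is intrinsic to $M$ and only needs the latent decomposition inside an expectation over the ideal distribution $\tau$, but the symmetry step is essential — without it, as you note, the plurality center alone only yields the average-radius statement, which does not contradict $(p,L)$-list-decodability. Both proofs share the paper's (implicit) glossing over of the distinctness of the $L$ columns and the Remark~\ref{rem:nitpick} rounding, so no gap there relative to the paper's own standard of rigor.
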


\begin{proof}
Let $M \in \cM_{n,\tau}$.  We want to show that 
the columns of $M$ all lie in a single ball of radius $pn$.

By definition of $\tau$ and $\cM_{n,\tau}$, we may write the $j$-th row of $M$ as $u\ind{j}+\alpha_j\mathbf{1}_L$, so that the empirical distribution of the pairs $(u\ind{j},\alpha_j)_{1\le j\le n}$ is $\Ber_q(p)^L\times \Uni(\F_q)$.\footnote{This is without loss of generality: if not, as per Remark \ref{rem:nitpick}, we can associate pairs with rows so that the empirical distribution is close to $\Ber_q(p)^L\times \Uni(\F_q)$ up to an additive factors that are $o(1)$ as $n\to\infty$.  After adjusting parameters, this has a negligible effect on the analysis and final result.}

For any $i\in[L]$, the number of $j\in[n]$ such that $M_{i,j}=u\ind{j}_i + \alpha_j\neq \alpha_j$ is exactly the number of times $u\ind{j}_i\neq 0$, which is $pn$, since $u\ind{j}_i$ is distributed as $\Ber_q(p)$. 
Thus, each column $M_{i,*}$ of $M$ has distance at most $pn$ from the word $(\alpha_1,\dots,\alpha_n)$,
so that any code containing $M$ has $L$ codewords in a ball of radius $pn$ and hence is not $(p,L)$-list-decodable.
\end{proof}

Next, we show that $\tau$ is appropriately implicitly rare for large enough $L$.

\begin{lemma}\label{lem:ld} 
  Let $p\in(0,1-\frac{1}{q})$ and let $\delta > 0$.
  There exists $L_{p,q,\delta}$ such that, for $L\ge L_{p,q,\delta}$, the distribution $\tau$
given in Definition~\ref{def:badtau_ld}
is \emph{not} $\inparen{h_q(p) + \frac{h_q(p)}{L+\delta}}$-implicitly rare.
\end{lemma}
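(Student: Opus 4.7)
The plan is to show, for every full-rank linear map $A : \F_q^L \to \F_q^{L'}$ with $L' \le L$, that $H_q(A\tau) \ge L'(h_q(p) + h_q(p)/(L+\delta))$, provided $L$ is at least some threshold $L_{p,q,\delta}$. Writing $w := A\mathbf{1}_L$, we have $A\tau = Au + \alpha w$ where $u \sim \Ber_q(p)^L$ and $\alpha \sim \Uni(\F_q)$ are independent. I would split the argument by whether $w = 0$, reducing the case $w \ne 0$ to the case $w = 0$.

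\textbf{Reduction when $w \ne 0$.} Let $\pi_w : \F_q^{L'} \to \F_q^{L'}/\spn(w) \cong \F_q^{L'-1}$ be the quotient. Since $\alpha w \in \spn(w)$, $\pi_w(A\tau) = \pi_w(Au) = Bu$, where $B := \pi_w \circ A$. Conditional on $Bu$, the value $A\tau$ is uniform over the coset of $\spn(w)$ (of size $q$), because $\alpha$ is uniform over $\F_q$ and independent of $u$; hence $H_q(A\tau) = H_q(Bu) + 1$. Since $B$ has rank $L'-1$ and satisfies $B\mathbf{1}_L = \pi_w(w) = 0$, it falls under the $w=0$ case at rank $L'-1$. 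Applying that bound yields $H_q(A\tau) \ge (L'-1)(h_q(p) + h_q(p)/(L+\delta)) + 1$, which exceeds $L'(h_q(p) + h_q(p)/(L+\delta))$ whenever $1 \ge h_q(p)(1 + 1/(L+\delta))$; this holds for $L$ sufficiently large because $h_q(p) < 1$.

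\textbf{The case $w = 0$.} Now $\mathbf{1}_L \in \ker A$, so $L' \le L - 1$ and $A$ factors as $\bar A \pi$ where $\pi : \F_q^L \to \F_q^L / \spn(\mathbf{1}_L)$. Identifying the quotient with $\F_q^{L-1}$ via the basis of representatives with $L$-th coordinate zero, $\pi(u) = (u_i - u_L)_{i \in [L-1]}$. Pick $S \subseteq [L-1]$ of size $L'$ so that $\bar A_S$ is invertible; conditioning on $\pi(u)_{[L-1] \setminus S}$ gives $H_q(A\tau) \ge H_q(\pi(u)_S)$. Conditional on $u_L$, the vector $(u_i - u_L)_{i \in S}$ is a product of $L'$ iid shifted-$\Ber_q(p)$'s, each with entropy $h_q(p)$. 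The chain rule therefore gives
\[
H_q(\pi(u)_S) = L' h_q(p) + h_q(p) - H_q(u_L \mid \pi(u)_S),
\]
and the remaining task is to show $H_q(u_L \mid \pi(u)_S) \le h_q(p) - L' h_q(p)/(L+\delta)$.

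\textbf{Bounding $H_q(u_L \mid \pi(u)_S)$.} The key observation is that $\pi(u)_{S,i} = u_i - u_L$ equals $-u_L$ whenever $u_i = 0$, which happens with probability $1-p > 1/q$. Hence the ``most common value'' in $\pi(u)_S$ is $-u_L$ with probability at least $1 - q\,e^{-c_p L'}$ for some constant $c_p > 0$ depending on $p$ and $q$; via Fano's inequality, this decoder gives $H_q(u_L \mid \pi(u)_S) = o(1)$ whenever $L' \to \infty$, comfortably meeting the required bound. The main obstacle is the regime of small constant $L'$, most stringently $L' = 1$: a direct calculation gives $H_q(u_L \mid u_j - u_L) = 2h_q(p) - H_q(u_j - u_L)$, and $H_q(u_j - u_L) > h_q(p)$ strictly because two non-uniform $\Ber_q(p)$ random variables are not independent of their difference. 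This yields a positive gap $\eta_{p,q} > 0$ depending only on $p$ and $q$; since $\eta_{p,q}$ shrinks like $(1-1/q-p)^2$ as $p \to 1 - 1/q$, the threshold $L_{p,q,\delta}$ must be taken to grow roughly like $(1-1/q-p)^{-2}$ (with appropriate dependence on $\delta$) to ensure the desired inequality uniformly for all $L' \in [1, L-1]$.
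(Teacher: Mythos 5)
Your reduction for the case $w := A\mathbf{1}_L \neq 0$ is fine: since $\alpha$ is uniform and independent of $u$, conditioning on $\pi_w(A\tau)=Bu$ indeed leaves $A\tau$ uniform on a coset of $\spn(w)$, so $H_q(A\tau)=H_q(Bu)+1$, and the bookkeeping $1 \ge h_q(p)(1+\frac{1}{L+\delta})$ for large $L$ is correct. The genuine gap is in the case $w=0$, at the step ``conditioning on $\pi(u)_{[L-1]\setminus S}$ gives $H_q(A\tau)\ge H_q(\pi(u)_S)$.'' Conditioning only yields $H_q(A\tau)\ge H_q\bigl(\pi(u)_S \mid \pi(u)_{[L-1]\setminus S}\bigr)$, and this conditional entropy is \emph{not} $H_q(\pi(u)_S)$: the coordinates of $\pi(u)$ are correlated through the common shift $u_L$, and revealing $\pi(u)_{[L-1]\setminus S}$ essentially reveals $u_L$ when $L-1-L'$ is large. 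Concretely, take $A$ of rank $L'=1$ with $Au=u_1-u_L$: then $H_q\bigl(\pi(u)_1\mid \pi(u)_{\{2,\dots,L-1\}}\bigr)=h_q(p)+\exp{-\Omega_{p,q}(L)}$, which falls below the target $h_q(p)+\frac{h_q(p)}{L+\delta}$, even though the true quantity $H_q(u_1-u_L)$ does meet it. In other words, the conditioning destroys exactly the ``extra'' entropy (the uncertainty in $u_L$, or the collision gap $\eta_{p,q}$) that your subsequent chain-rule/Fano accounting is supposed to harvest, so the rest of your argument bounds the wrong quantity. Nor is the unconditional inequality $H_q(\bar A_S\pi(u)_S+\bar A_{[L-1]\setminus S}\pi(u)_{[L-1]\setminus S})\ge H_q(\pi(u)_S)$ obvious, precisely because the two blocks are correlated through $u_L$; proving something of this kind is essentially the content of the lemma.

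For comparison, the paper avoids this by never discarding correlated coordinates wholesale: after Gaussian elimination it writes $A\tau = v + \sum_i w\ind{i}\alpha_i$ with the $v_j,\alpha_i$ independent, partitions $[L']$ into blocks $J_i$ given by the (disjointified) supports of the $w\ind{i}$, and uses a chain of conditional mutual informations plus data processing to reduce to $\sum_i H_q\bigl(v_{J_i}+w\ind{i}_{J_i}\alpha_i\bigr)$; each block is then handled by a two-regime lemma (a constant entropy gain $h_q(p^*)-h_q(p)$ for small blocks, Fano for large blocks). Your intended sources of extra entropy (the strict gap $H_q(u_j-u_L)>h_q(p)$ for small rank, Fano for large rank) are the right ingredients and mirror that lemma, but to make your route work you would need a decomposition that charges the extra entropy to the retained coordinates without conditioning away the shift---at which point you are reconstructing the paper's block argument.
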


We prove Lemma~\ref{lem:ld} in Section~\ref{sec:pflemld} below.
Before we prove Lemma~\ref{lem:ld}, we show how to use it to prove Theorem~\ref{thm:main_ld}.

\begin{proof}[Proof of Theorem \ref{thm:main_ld}, assuming Lemma~\ref{lem:ld}]
Let $L_{p,q,\delta/2}$ be as in Lemma~\ref{lem:ld} and choose $\varepsilon_{p,q,\delta} \defeq \frac{h_q(p)}{L_{p,q,\delta/2}+1}$.
Fix $\varepsilon < \varepsilon_{p,q,\delta}$.
Let $L=\floor{\frac{h_q(p)}{\varepsilon}-\delta}$.
Let $\tau$ be as in Definition \ref{def:badtau_ld} with this choice of $L$.
By Lemma~\ref{lem:ld}, as $L\ge L_{p,q,\delta/2}$, $\tau$ is not $\inparen{h_q(p) + \frac{h_q(p)}{L+\delta/2}}$-implicitly rare.  Thus, as $\varepsilon \le \frac{h_q(p)}{L+\delta} < \frac{h_q(p)}{L+\delta/2}$, there is some constant $c_{p,q,\eps}> 0$ so that $\tau$ is not $(h_q(p) + \varepsilon+c_{p,q,\varepsilon})$-implicitly rare.

Then Theorem~\ref{thm:technical} with $\eta=c_{p,q,\varepsilon}$ tells us that, for $n$ sufficiently large, a random linear code of rate $1 - (h_q(p) + \eps+c_{p,q,\varepsilon}) + c_{p,q,\varepsilon} = 1-h_q(p)-\varepsilon$ contains $L$ codewords given by some matrix $M \in \cM_{n,\tau}$ with probability at least $1 - q^{-\Omega_{p,q,\eps}(n)}$.

Finally, Proposition~\ref{prop:badtau_ld} implies that $\cC$ is not $(p,L)$-list-decodable.  Our choice of $L$ proves the theorem.
\end{proof}

\subsection{Proof of Lemma~\ref{lem:ld}}\label{sec:pflemld}
In this section we prove Lemma~\ref{lem:ld}, which completes the proof of Theorem~\ref{thm:main_ld}.
To prove Lemma~\ref{lem:ld} we need to prove that $A\tau$ has high entropy for any matrix $A$.
We begin with the following lemma, which essentially shows that this is true when $A$ is either the $L\times L$ identity $I_L$ or an $L\times (L+1)$ matrix with the identity and an additional column with all nonzero entries.
\begin{lemma}
  \label{lem:ber-p}
  Let $q$ be a prime power, $p\in\inparen{0,1-\frac{1}{q}}$, $p'\in\left[p,1-\frac{1}{q}\right]$, and $\delta>0$.
  There exists $L_{p,q,\delta}$ such that, for all $L\ge L_{p,q,\delta}$ and $0\le d\le L$, the following holds.
  Let $w$ be a fixed vector in $\mathbb{F}_q^d$ all of whose entries are nonzero.
  Let $v$ be a vector sampled from $\Ber_q(p)^{d}$ and let $\alpha$ be sampled from $\Ber_q(p')$.
  Then
  \begin{align}
    \label{eq:ber-p}
    H_q(v + \alpha w) \ge d\cdot \left(h_q(p) + \frac{h_q(p)}{L+\delta}\right).
  \end{align}
\end{lemma}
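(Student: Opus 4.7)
The plan is to reduce to the case $w = \mathbf{1}_d$, apply the identity $H_q(v + \alpha w) = d\cdot h_q(p) + I_q(\alpha; v + \alpha w)$, and lower-bound the mutual information via a concavity argument followed by Fano's inequality.

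\textbf{Reduction and decomposition.} Each $w_i$ is nonzero, so multiplication by $w_i^{-1}$ is an entropy-preserving bijection of $\F_q$ preserving $\Ber_q(p)$ (since scaling by a nonzero element permutes $\F_q^*$). Replacing $(v+\alpha w)$ coordinate-wise by $(v+\alpha w)_i / w_i$, we may assume $w = \mathbf{1}_d$ and write $Y_i = v_i + \alpha$, so that $H_q(Y_1, \ldots, Y_d \mid \alpha) = d \cdot h_q(p)$ and it suffices to show
\[ I_q(\alpha; Y_1, \ldots, Y_d) \;\ge\; \frac{d\cdot h_q(p)}{L + \delta}. \]

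\textbf{Concavity of the mutual information.} I would show that $d \mapsto I(\alpha; Y_1, \ldots, Y_d)$ is concave on the nonnegative integers, which (together with the value $0$ at $d = 0$) yields $I_q(\alpha; Y_1, \ldots, Y_d) \ge (d/L) \cdot I_q(\alpha; Y_1, \ldots, Y_L)$ for $0 \le d \le L$. It suffices to verify that the marginal increment $I(\alpha; Y_{d+1} \mid Y_1, \ldots, Y_d)$ is non-increasing in $d$. Exchangeability of $Y_d$ and $Y_{d+1}$ given $\alpha$ implies $I(\alpha; Y_d \mid Y_1, \ldots, Y_{d-1}) = I(\alpha; Y_{d+1} \mid Y_1, \ldots, Y_{d-1})$, and the chain rule gives
\[ I(\alpha; Y_{d+1} \mid Y_1, \ldots, Y_{d-1}) - I(\alpha; Y_{d+1} \mid Y_1, \ldots, Y_d) = I(Y_d; Y_{d+1} \mid Y_1, \ldots, Y_{d-1}) - I(Y_d; Y_{d+1} \mid \alpha, Y_1, \ldots, Y_{d-1}). \]
The second term vanishes by conditional independence of $Y_d, Y_{d+1}$ given $\alpha$, and the first is non-negative.

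\textbf{Fano and Chernoff.} It then remains to show $I_q(\alpha; Y_1, \ldots, Y_L) \ge \frac{L \cdot h_q(p)}{L + \delta}$ once $L \ge L_{p,q,\delta}$. Consider the (prior-free) majority-vote estimator $\hat\alpha = \argmax_{c \in \F_q} |\{i: Y_i = c\}|$. Conditional on $\alpha = c_0$, the count $|\{i: Y_i = c_0\}|$ is $\mathrm{Bin}(L, 1 - p)$, while $|\{i: Y_i = c\}| \sim \mathrm{Bin}(L, p/(q-1))$ for each $c \ne c_0$. Since $p < 1 - 1/q$ implies $1 - p > p/(q-1)$, Chernoff with a union bound gives $\Pr[\hat\alpha \ne \alpha] \le q\, e^{-\beta L}$ for some $\beta = \beta(p, q) > 0$. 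Fano's inequality then yields $H_q(\alpha \mid Y_1, \ldots, Y_L) \le O(L\, e^{-\beta L})$, uniformly over $p' \in [p, 1 - 1/q]$ (the estimator does not depend on the prior). Using $h_q(p') \ge h_q(p)$:
\[ I_q(\alpha; Y_1, \ldots, Y_L) = h_q(p') - H_q(\alpha \mid Y_1, \ldots, Y_L) \;\ge\; h_q(p) - O(L\, e^{-\beta L}) \;\ge\; \frac{L \cdot h_q(p)}{L + \delta}, \]
where the last inequality holds for $L \ge L_{p,q,\delta}$ because the slack $\frac{h_q(p) \delta}{L+\delta} = \Theta(1/L)$ eventually dominates the exponentially decaying error.

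The main obstacle is the concavity step: the computation is short, but identifying the right combination of exchangeability and the chain rule is the least routine ingredient, and it is what lets the argument interpolate uniformly from small $d$ (where one could get away with $I(\alpha; Y_1) > 0$) to $d = L$ (where Fano is essentially tight). Once concavity is in hand, the Fano-plus-Chernoff estimate and the bookkeeping of constants are standard.
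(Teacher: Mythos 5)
Your proposal is correct, and its key step is genuinely different from the paper's. The paper proves the lemma by a two-regime case analysis in $d$: for $d \le \eps_{p,q}L$ it peels off a single coordinate and uses that $v_1+\alpha w_1 \sim \Ber_q(p^*)$ with $h_q(p^*) \ge h_q(p)+\eps_{p,q}$, so one coordinate alone supplies the needed slack; for $d \ge d_{p,q,\delta}$ it writes $H_q(v+\alpha w) \ge d\,h_q(p)+h_q(p)-H_q(\alpha\mid v+\alpha w)$ and bounds the conditional entropy by Fano with a distance-minimizing estimator and Hoeffding, finally setting $L_{p,q,\delta}=d_{p,q,\delta}/\eps_{p,q}$ so the two regimes cover $[0,L]$. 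You instead normalize to $w=\mathbf{1}_d$ (valid, since $\Ber_q(p)$ is invariant under nonzero scaling and coordinatewise division is an entropy-preserving bijection), write $H_q(v+\alpha\mathbf{1})=d\,h_q(p)+I_q(\alpha;Y_{\le d})$, and replace the case split by the observation that the increments $I(\alpha;Y_{d+1}\mid Y_{\le d})$ are non-increasing — your exchangeability-plus-chain-rule computation is correct, using that the $Y_i$ are conditionally i.i.d.\ given $\alpha$ (this is where the normalization to $w=\mathbf{1}_d$ is actually needed) — so that $I_q(\alpha;Y_{\le d}) \ge (d/L)\,I_q(\alpha;Y_{\le L})$, and only the endpoint $d=L$ needs Fano plus Chernoff (your majority-vote estimator is the normalized analogue of the paper's estimator, and the uniformity over $p'$ via $h_q(p')\ge h_q(p)$ matches the paper). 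What each buys: the paper's split is elementary and avoids any auxiliary coordinates, but juggles two thresholds to define $L_{p,q,\delta}$; your concavity lemma handles all $d$ uniformly in one pass and isolates the interpolation as a clean information-theoretic fact, at the cost of introducing the auxiliary variables $Y_{d+1},\dots,Y_L$ and the normalization step. Both yield the same non-explicit $L_{p,q,\delta}$, and I see no gap in your argument.
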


\begin{proof}
  If $d=0$, the assertion is trivial, so assume $d\ge1$. As a guide to the reader, we emphasize that throughout the proof the vector $v$ and the field element $\alpha$ are random variables, while the vector $w$ is fixed.

We will bound $H_q(v + \alpha w)$ in two cases, one when $d$ is small (relative to $L$) and one when $d$ is large.  (The precise definitions of ``small'' and ``large'' will be determined below.)

First we consider the case where $d$ is small.  We have (for any $d$) that
  \begin{align}
    H_q(v + \alpha w)
    \ &= \   H_q(v_1+\alpha w_1,v_2+\alpha w_2,\dots,v_d+\alpha w_d) \nonumber\\
    \ &= \   H_q(v_2+\alpha w_2,\dots,v_d+\alpha w_d|v_1+\alpha w_1) + H_q(v_1+\alpha w_1) \nonumber\\
    \ &\ge \ H_q(v_2+\alpha w_2,\dots,v_d+\alpha w_d|v_1,\alpha) + H_q(v_1+\alpha w_1)  \nonumber\\
    \ &= \ H_q(v_2,\dots,v_d) + H_q(v_1+\alpha w_1)  \label{eq:p1}
\end{align}
  The second equality uses the definition of conditional entropy.
  The inequality follows from the data processing inequality. The last equality uses the fact that $w$ is a fixed vector so once $\alpha$ is known, $\alpha w_2,\dots,\alpha w_d$ are also known, along with the assumption that the $v_1$ is independent of $v_2,\dots,v_L$.

Now, $v_1+\alpha w_1$ is nonzero if $v_1=0$ and $\alpha\neq 0$, if $v_1\neq 0$ and $\alpha = 0$, or if $v_1,\alpha\neq 0$ and $v_1+\alpha w_1\neq 0$. 
This happens with probability $p^* = (1-p')p+(1-p)p'+\frac{(q-2)pp'}{q-1}$.  In the case that $v_1+\alpha w_1$ is nonzero, then by symmetry each nonzero element of $\mathbb{F}_q$ has equal probability.  Thus $v_1 + \alpha w_1$ it is distributed as $\Ber_q(p^*)$.  One can check that $H_q(\Ber_q(p^*)) = h_q(p^*)$, so from \eqref{eq:p1} we have
\begin{align}
H_q(v + \alpha w) 
&\ge H_q( v_2, \dots, v_d) + h_q(p^*) \\
&= (d-1)\cdot h_q(p) + h_q(p^*). \label{eq:p2}
\end{align}

Since 
\[ p<2p(1-p) + \frac{(q-2)p^2}{q-1}\le p^*\le 1-\frac{1}{q}, \]
and $h_q(\cdot)$ is strictly increasing on $(0,1-\frac{1}{q})$, we have
$ h_q(p^*) \ge h_q(p)+\varepsilon_{p,q} $
for some $\varepsilon_{p,q} > 0$ depending only on $p$ and $q$. The first inequality uses the assumption $p < 1-1/q$ while the second inequality follows from the fact that $p^*$ increases with $p'$ and $p' \geq p$. 
  Thus, when $d\le \varepsilon_{p,q}\cdot L$, \eqref{eq:p2} implies that
  \begin{align}
    H_q(v + \alpha w)\ge d\cdot h_q(p) + \varepsilon_{p,q} 
    \ge d \cdot \left(h_q(p) + \frac{1}{L}\right)
    > d \cdot \left(h_q(p) + \frac{h_q(p)}{L+\delta}\right),
  \end{align}
where in the last inequality we have used that $\delta > 0$ and $h_q(p) < 1$. This lower bounds $H_q(v + \alpha w)$ in the case when $d$ is ``small,'' specifically when $d < \eps_{p,q} \cdot L$.

Next we handle the case when $d$ is ``large.''
We have (for any $d$) that
  \begin{align}
    H_q(v + \alpha w)
    \ &= \   H_q(v+\alpha w | \alpha)
    + H_q(\alpha) - H_q(\alpha | v + \alpha w) \nonumber\\
    \ &= \  d\cdot h_q(p) + h_q(p') - H_q(\alpha | v +\alpha w) \\
    \ &\ge \ d \cdot h_q(p) + h_q(p) - H_q(\alpha |v+\alpha w).
  \end{align}
  It thus suffices to show that $H_q(\alpha|v+\alpha w)$ is ``small''.
  To do this, we leverage Fano's inequality.

  Let $\hat\alpha$ be the element of $\mathbb{F}_q$ that minimizes the Hamming distance $\Delta_H(\hat\alpha w,v+\alpha w)$, breaking ties arbitrarily.
  In expectation a $1-p > \frac{1}{q}$ fraction of the $d$ coordinates of $v$ are 0.
Similarly, for any vector $w'\in\mathbb{F}_q^{d}$ with all nonzero entries, in expectation a $\frac{p}{q-1}<\frac{1}{q}$ fraction of the coordinates of $v$ agree with $w'$.

By Hoeffding's inequality, for any nonzero $\zeta \in \F_q$, 
\begin{align}
\PR{ \Delta_H(v, \zeta w) \ge \frac{d}{q} } \leq 2\mathrm{exp}\inparen{ -2d \inparen{ \frac{1}{q} - \frac{p}{q-1}} } = \mathrm{exp}(-\Omega_{p,q}(d))
\label{eq:hoef-1}
\end{align}
and similarly
\begin{align} 
\PR{ \Delta_H(v, \mathbf{0}) \le \frac{d}{q} } \leq 2 \mathrm{exp} \inparen{ -2d \inparen{ { 1-p-\frac{1}{q} } } } = \mathrm{exp}(-\Omega_{p,q}(d)). 
\label{eq:hoef-2}
\end{align}
If none of the events in \eqref{eq:hoef-1} and \eqref{eq:hoef-2} hold, then we have $\Delta_H(\alpha' w, v+\alpha w) < d/q$ for all $\alpha'\neq \alpha$ and $\Delta_H(\alpha w,v+\alpha w)> d/q$, in which case $\hat\alpha = \alpha$. Thus, by the union bound over all $q$ events in \eqref{eq:hoef-1} and \eqref{eq:hoef-2}, the probability that $\alpha\neq \hat\alpha$ is at most
\[ p_{err} \defeq \PR{ \hat\alpha\neq \alpha} \leq q\cdot \mathrm{exp}\inparen{ -\Omega_{p,q}(d) } = \mathrm{exp}\inparen{ -\Omega_{p,q}(d) }. \]
  By Fano's inequality, as $\alpha$ takes at most $q$ values and as $\hat \alpha$ is a function only of $v+\alpha w$, we have
  \begin{align}
    H_q(\alpha|v+\alpha w) 
    = \frac{1}{\log q}H(\alpha|v+\alpha w) 
    \le \frac{1}{\log q}(h(p_{err}) + p_{err}\cdot \log(q-1))
    \le \mathrm{exp}\inparen{-\Omega_{p,q}(d)}.
  \end{align}
  Thus, there exists some $d_{p,q,\delta}$ such that, for $d\ge d_{p,q,\delta}$, we have $H_q(\alpha|v+\alpha w)  \le \frac{\delta h_q(p)}{d+\delta}$, in which case
  \begin{align}
    H_q(v + \alpha w)
    \ &\ge \  d\cdot h_q(p) + h_q(p) - H_q(\alpha | v +\alpha w) \nonumber\\
    \ &\ge \ d\cdot h_q(p) + \frac{d}{d+\delta}\cdot h_q(p) \nonumber\\
    \ &\ge \ d\cdot \left(h_q(p) + \frac{h_q(p)}{L+\delta}\right). 
  \end{align}
This completes the case where $d$ is ``large.''

  We have shown that \eqref{eq:ber-p} holds when $d\le \varepsilon_{p,q}\cdot L$ and when $d\ge d_{p,q,\delta}$.
  Thus, for 
$$L\ge d_{p,q,\delta}/\varepsilon_{p,q}\defeq L_{p,q,\delta},$$ 
we have that \eqref{eq:ber-p} always holds, as desired.
\end{proof}

Using Lemma~\ref{lem:ber-p}, we may now prove Lemma~\ref{lem:ld} which says that $\tau$ is not implicitly rare.
\begin{proof} [Proof of Lemma \ref{lem:ld}]
Let $L_{p,q,\delta}$ be as in Lemma~\ref{lem:ber-p}.
Let $L\ge L_{p,q,\delta}$, and let $\tau$ be the corresponding distribution in the lemma statement.\footnote{As in Lemma \ref{lem:lr}, we treat the output of $\tau$ as a column vector.}
Fix a full-rank matrix $A$ of rank $L'$.
As $\tau$ is supported on $\mathbb{F}_q^L$, the rank of $A\tau$ is $L'$.
We show that $H_q(A\tau)\ge L'\cdot (h_q(p) + \frac{h_q(p)}{L+\delta})$.
At a high level, our strategy is to decompose the distribution $A\tau$ into several distributions that each have the set up of Lemma~\ref{lem:ber-p}. Furthermore, this decomposition has enough conditional independence that the entropy of $A\tau$ can be lower bounded by the sum of the entropies of the smaller distributions, which we can lower bound by Lemma~\ref{lem:ber-p}.

As $A$ is full-rank it must have exactly $L'$ rows.
Since permuting the coordinates of $\tau$ yields the same distribution $\tau$, permuting the columns of $A$ does not change the entropy $H_q(A\tau)$; thus, we may assume that the first $L'$ columns are linearly independent.
Furthermore, if $B$ is invertible, $H_q(BA\tau)=H_q(A\tau)$. 
Thus, by running Gaussian elimination on the rows of $A$, we may assume without loss of generality that
\begin{align}
  A = 
  \left[
\begin{array}{ccc|cccc}
 &  &  & \vert & \vert & \cdots & \vert \\
 & I_{L'} &  & w\ind1 & w\ind{2} & \cdots & w\ind{k}\\
 &  &  & \vert & \vert & \cdots & \vert \\
\end{array}
\right]
\end{align}
where $w\ind{1},\dots,w\ind{k}\in \mathbb{F}_q^{L'}$ and $k=L-L'$.
Let a sample from $\tau$ be given by 
\begin{align}
  \begin{bmatrix}
    v_1 \\
    \vdots\\
    v_{L'}\\
    \alpha_1\\
    \vdots\\
    \alpha_k
  \end{bmatrix}
  + 
  \begin{bmatrix}
    1 \\
     \\
    \vdots\\
    \\
    \\
    1
  \end{bmatrix}
  \cdot \alpha_{k+1}.
\end{align}
where $v_1,\dots,v_{L'},\alpha_1,\dots,\alpha_k\sim \Ber_q(p)$ and $\alpha_{k+1}\sim \Ber_q(1-\frac{1}{q})$.
(Note that this means that $\alpha_{k+1}$ is uniform on $\mathbb{F}_q$.)
Then $A\tau$ is given by
\begin{align}
  \label{eq:atau}
  \begin{bmatrix}
  v_1\\
  \vdots\\
  v_{L'}
  \end{bmatrix}
  +
  \sum_{i=1}^{k+1}
  \begin{bmatrix}
    \vert\\
    w\ind{i}\\
    \vert 
  \end{bmatrix}
  \cdot \alpha_i 
\end{align}
where we let $w\ind{k+1}$ be the product $A\cdot \mathbf{1}_{L}\in \mathbb{F}_q^{L'}$.
We emphasize that $v_1,\dots,v_{L'},\alpha_1,\dots,\alpha_{k+1}$ are independent random variables, while $A$ and $w\ind{1},\dots,w\ind{k}$ are fixed.

By definition of $A$ and $w\ind{k+1}$, 
for any coordinate $i \not\in \bigcup_{j=1}^k \supp(w\ind{j})$, we have $i \in \supp(w\ind{k+1})$.
Thus, $\bigcup_{i=1}^{k+1}\supp(w\ind{i}) = [L']$.
For $i=1,\dots,k+1$, let $J_i=\supp(w\ind{i})\setminus (\bigcup_{j=1}^{i-1}J_j)$  (when $i=1$ the union is the empty set), so that $J_1,\dots,J_{k+1}$ form a partition of $[L']$.
Recall that the notation $v_J\in \mathbb{F}_q^{|J|}$ denotes the vector $(v_i)_{i\in J}$ with coordinates from $J$ in increasing order.
We have
\begin{align}
  H_q(A\tau)
  \ &= \ H_q(A\tau | v_{J_{k+1}}, \alpha_{k+1}) + I_q(A\tau; v_{J_{k+1}}, \alpha_{k+1}) \nonumber\\ 
  \ &= \ H_q(A\tau | v_{J_k},v_{J_{k+1}}, \alpha_k,\alpha_{k+1}) + I_q(A\tau; v_{J_k}, \alpha_k | v_{J_{k+1}},\alpha_{k+1}) + I_q(A\tau; v_{J_{k+1}}, \alpha_{k+1}) 
\end{align}
Continuing, we have 
\begin{align}
  \label{eq:4-1}
  H_q(A\tau)
  \ &= \ H_q(A\tau | v_{J_1},\dots,v_{J_{k+1}}, \alpha_1,\dots,\alpha_{k+1}) + 
  \sum_{i=1}^{k+1} I_q(A\tau; v_{J_i}, \alpha_i | v_{J_{i+1}},\dots, v_{J_{k+1}},\alpha_{i+1},\dots,\alpha_{k+1}) \nonumber\\
  \ &= \ \sum_{i=1}^{k+1} I_q(A\tau; v_{J_i}, \alpha_i | v_{J_{i+1}},\dots, v_{J_{k+1}},\alpha_{i+1},\dots,\alpha_{k+1}) ,
\end{align}
where the second equality uses that $J_1,\dots,J_{k+1}$ form a partition of $[L']$, so $A\tau$ is completely determined by $v_{J_1}, \ldots, v_{J_{k+1}}, \alpha_1, \ldots, \alpha_{k+1}$, and thus $H_q(A\tau|v_{J_1},\dots,v_{J_{k+1}},\alpha_1,\dots,\alpha_{k+1})=0$. For clarity, we note that the summand above when $i=k+1$ is simply $I_q(A\tau; v_{J_{k+1}}, \alpha_{k+1})$.
We thus have
\begin{align}
  H_q(A\tau)
  \ &\ge \ \sum_{i=1}^{k+1} I_q((A\tau)_{J_i}; v_{J_i}, \alpha_i | v_{J_{i+1}},\dots, v_{J_{k+1}},\alpha_{i+1},\dots,\alpha_{k+1}) \nonumber\\
  \ &= \ \sum_{i=1}^{k+1} I_q\left(v_{J_i} + \sum_{j \ge  i}^{} w\ind{j}_{J_i}\cdot \alpha_j; v_{J_i}, \alpha_i \bigg\vert v_{J_{i+1}},\dots, v_{J_{k+1}},\alpha_{i+1},\dots,\alpha_{k+1}\right) \nonumber\\
  \ &= \ \sum_{i=1}^{k+1} I_q\left(v_{J_{i}} + w\ind{i}_{J_i}\cdot \alpha_i; v_{J_i}, \alpha_i \big\vert v_{J_{i+1}},\dots, v_{J_{k+1}},\alpha_{i+1},\dots,\alpha_{k+1}\right) \nonumber\\
  \ &= \ \sum_{i=1}^{k+1} I_q\left(v_{J_{i}} + w\ind{i}_{J_i}\cdot \alpha_i; v_{J_i}, \alpha_i \right) \nonumber\\
  \ &= \ \sum_{i=1}^{k+1} H_q\left(v_{J_{i}} + w\ind{i}_{J_i}\cdot \alpha_i \right)
\end{align}
The inequality applies the data processing inequality to \eqref{eq:4-1}, using that $(A\tau)_{J_i}$ is a function of $A\tau$.
The first equality uses \eqref{eq:atau} and that $w\ind{1},\dots,w\ind{i-1}$ have no support in $J_i$ by definition of $J_i$.
The second equality uses that $\alpha_{i+1},\dots,\alpha_{k+1}$ are being conditioned on.
The third equality uses that the $v_i$'s and $\alpha_i$'s are all independent and that the $J_i$ are pairwise disjoint, so changing $v_{J_{i+1}},\dots,v_{J_{k+1}},\alpha_{i+1},\dots,\alpha_{k+1}$ does not affect $v_{J_{i}} + w\ind{i}_{J_i}\cdot \alpha_i$.
The last equality uses that $H_q(v_{J_i}+w\ind{i}_{J_i}\cdot \alpha_i | v_{J_i},\alpha_i) = 0$.
As $L\ge L_{p,q,\delta}$ and as $w\ind{i}_{J_i}$ has all nonzero entries by definition of $J_i$, we may apply Lemma~\ref{lem:ber-p} with $v = v_{J_i}$ and $\alpha=\alpha_i$ and $w = w\ind{i}_{J_i}$ and $d=|J_i|$.
This gives
\begin{align}
  H_q(A\tau)
  \ &\ge \ \sum_{i=1}^{k+1} H_q\left(v_{J_{i}} + w\ind{i}_{J_i}\cdot \alpha_i \right) \nonumber\\
  \ &\ge \ \sum_{i=1}^{k+1} |J_i|\cdot \left(h_q(p) + \frac{h_q(p)}{L+\delta}\right) \nonumber\\
  \ &= \ L'\cdot \left(h_q(p) + \frac{h_q(p)}{L+\delta}\right),
\end{align}
as desired.
The last equality uses that $J_1,\dots,J_{k+1}$ partition $[L']$. 
\end{proof}

\section{Upper bounds for average-radius list-decoding over $\F_2$}\label{sec:upper_bd}
In this section we prove the following theorem. Recall that we abbreviate $h(p)=h_2(p)$.

\begin{theorem}\label{thm:main-avg-rad}
	Let $n \in \N$. Let $p \in (0,\frac 12)$ and $R = 1-h(p)-\eps$, where $0<\eps<1-h(p)$. Let $L = \lfloor\frac{h(p)}\eps + 2\rfloor$. Then, a random linear code $\cC \leq \F_2^n$ of rate $R$ is $(p,L)$-average-radius list-decodable with probability $1-2^{-\Omega_{p,\eps}(n)}$.
\end{theorem}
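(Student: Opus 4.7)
Following the inductive framework of \cite{GuruswamiHSZ02,LiWootters}, the plan is to build the random linear code one basis vector at a time: sample $b_1,\ldots,b_k$ independently and uniformly at random from $\F_2^n$ (where $k = \lceil Rn \rceil$), and set $\cC_i = \spn\{b_1,\ldots,b_i\}$. The strategy is to define a potential function $S_{\cC_i}$, argue that it remains bounded (say $\le 2$) with high probability as $i$ grows from $0$ to $k$, and then conclude $(p,L)$-average-radius list-decodability directly from this bound.

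The first task is to adapt the potential from \cite{LiWootters}. Where LW track, for each $x \in \F_2^n$, the raw ball count $|B(x,pn) \cap \cC_i|$, here I would instead use a \emph{weighted} version in which codewords closer to $x$ contribute more --- something of the form $S_{\cC_i}(x) = \sum_{c \in \cC_i \cap B(x,pn)} g(\Delta_H(x,c)/n)$ for a suitable nonincreasing, convex weight $g$, aggregated as $S_{\cC_i} = \sum_x f(S_{\cC_i}(x))$ for an appropriate convex $f$ (essentially exponential, as in LW). The inductive growth bound of the form $\mathbb{E}[S_{\cC_{i+1}} \mid \cC_i] \lesssim S_{\cC_i} + (\text{small})$, together with a Chernoff/martingale concentration argument over the $k$ steps, can largely be ported from LW, because the governing quantity is the distribution of $\Delta_H(x, c+b_{i+1})$ for $c \in \cC_i$ and a uniform shift $b_{i+1}$, which is insensitive to whether the codewords are counted with indicators or with weights.

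The main obstacle, and the heart of the argument, is the final step: showing that $S_{\cC} = O(1)$ implies $(p,L)$-average-radius list-decodability with the sharp constant $L = \floor{h(p)/\eps + 2}$. The difficulty is that a bad configuration for average-radius list-decoding consists of $L$ codewords $c_1,\ldots,c_L$ and a center $x$ with $\tfrac{1}{L}\sum_j \Delta_H(x, c_j) \le pn$, but individual $c_j$'s may lie \emph{outside} $B(x,pn)$; the list-decoding argument of LW only ``sees'' codewords inside the ball. The plan is to exploit convexity of $g$: by Jensen's inequality, the quantity $\sum_j g(\Delta_H(x,c_j)/n)$ is (up to boundary effects from codewords outside $B(x,pn)$) lower-bounded by $L \cdot g(p)$, which, once $L$ exceeds $h(p)/\eps + 1$, pushes $f(S_\cC(x))$ past the threshold forced by the potential bound. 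The technical work will be in handling codewords outside the ball --- either by truncating the weight outside the ball and arguing that a nearby center $x'$ (obtained, say, by flipping coordinates where outlier codewords differ from $x$) absorbs enough of the outliers that the remaining inside-ball codewords still drive the potential at $x'$ above threshold, or by allowing $g$ to be supported on $[0,1]$ entirely and controlling the tail contribution.

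Bringing the additive constant down from $3$ (in \cite{LiWootters}) to $2$ is the reason all three steps must be carried out sharply; in particular, the choice of $g$ and $f$ must be tuned so that the convexity-based implication in the third step is essentially lossless, and the growth bound in the second step must lose at most $h(p)/\eps$ in the final tally. I expect the third step to be the primary obstacle, both because it is genuinely new relative to \cite{GuruswamiHSZ02,LiWootters} and because it is where the additive improvement must come from.
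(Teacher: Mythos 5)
Your overall architecture is the paper's: build the code one generator at a time, control a weighted potential in which nearby codewords count more, and then convert a bound of the form $S_\cC\le 2$ into $(p,L)$-average-radius list-decodability. Your first two steps (sub-multiplicativity of the potential under adding a generator, and the high-probability growth/concentration argument) do indeed port over from Li--Wootters essentially unchanged, exactly as in the paper. The gap is in the third step, which you correctly flag as the crux, and neither of the two routes you sketch survives scrutiny. If the weight is truncated at radius exactly $p$, the potential is blind to the configurations you must exclude: take one codeword at relative distance $p-(L-1)\epsilon'$ from $x$ and $L-1$ codewords at relative distance $p+\epsilon'$ for a small constant $\epsilon'$; the average distance is exactly $p$, yet the potential at $x$ sees a single term of size about $\eps$, so a bound like $S_\cC\le 2$ cannot by itself rule this out, and there need not exist any alternative center that brings many of the $L-1$ outliers inside radius $p$ simultaneously (average-radius badness does not imply ordinary-radius badness, which is the whole point), so the ``flip coordinates toward the outliers'' repair has no clear footing. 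If instead the weight is strictly positive on all of $[0,\tfrac12]$, the potential cannot stay bounded at all: a rate-$R$ code has $2^{Rn}$ codewords, almost all at distance about $\tfrac12$ from a typical $x$, each contributing at least $g(\tfrac12)>0$, so the weighted list size is $\Omega(2^{Rn})$ and the exponential aggregation needed for the squaring trick explodes.

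What actually closes the gap in the paper is a third option: truncate at a threshold $\lambda$ strictly \emph{between} $p$ and $h^{-1}(1-R)$, and take the weight to be precisely the entropy surplus $M(\gamma)=1-R-h(\gamma)$ for $\gamma<\lambda$, so that $\eta=1-R-h(\lambda)>0$ supplies both the $\tfrac{1}{1+\eta}$ damping needed for the with-high-probability argument and the slack in the final count. The conversion then runs Jensen in the direction opposite to your sketch: from $S_\cC\le 2$ one extracts, for every center $x$ and every subset $D$ of codewords within distance $\lambda$, the lower bound $\sum_{y\in D}h(\delta(x,y))\ge(|D|-1-\eta)(1-R)-\tfrac{1+\eta}{n}$ (Lemma~\ref{lem:S_C-to-entropy}); codewords at distance beyond $\lambda$ need no absorption at all, since each contributes at least $h(\lambda)=1-R-\eta$ to the entropy sum for free. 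Summing over any $L$-subset $\Lambda$ gives $\sum_{y\in\Lambda}h^*(\delta(x,y))>Lh(p)$, and concavity plus monotonicity of the extended entropy $h^*$ turns this into average distance greater than $p$. Your convexity-of-$g$ idea is morally dual to this (with $g=\max(1-R-h^*,0)$, Jensen on the convex $g$ is essentially Jensen on the concave $h^*$), but the ingredients that make it work --- the entropy-surplus weight and a cutoff radius chosen above $p$ yet below $h^{-1}(1-R)$ --- are exactly what is missing from your proposal, so as written the third step is a genuine gap rather than a deferred technicality.
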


Recall from the introduction that, following the techniques in \cite{GuruswamiHSZ02} and \cite{LiWootters}, we imagine sampling independent and uniform vectors $b_1,\dots,b_k$ and constructing the ``intermediate'' random linear codes $\cC_i = \spn\{b_1,\dots,b_i\}$. A potential function based argument is used to show that, with high probability, each of these intermediate codes is indeed $(p,L)$-average-radius list-decodable; in particular, this is true for $\cC_k$. 

Before discussing our potential function, we first briefly review the techniques of \cite{GuruswamiHSZ02} and \cite{LiWootters}; in particular, we describe the potential function they use. First, for a code $\cC$ and a vector $x \in \F_2^n$, we define 
\[
	L_\cC(x) := |\{c \in \cC: \delta(x,c) \leq p\}| \ .
\]
In \cite{GuruswamiHSZ02}, the authors define
\[
	S_{\cC} := \frac{1}{2^n}\sum_{x \in \F_2^n}2^{\eps n L_\cC(x)}
\]
and observe that, for any $b_1,\dots,b_i \in \F_2^n$, 
\[
	\Eover{b_{i+1}\sim \F_2^n}{S_{\cC_i+\{0,b_{i+1}\}}} = S_{\cC_i}^2 \ ,
\]
where we recall $\cC_i=\spn\{b_1,\dots,b_i\}$.\footnote{Here and throughout, for two subsets $A,B \subseteq \F_2^n$, we denote $A+B=\{a+b:a\in A, b \in B$. Thus, $\cC_i+\{0,b_{i+1}\} = \cC_{i+1}$.} That is, the potential function squares in expectation, so the probabilistic method guarantees that we can choose some $b_{i+1}$ for which $S_{\cC_{i+1}}\leq S_{\cC_i}^2$. 
Thus, for some choice of $b_1,\dots,b_k$, one has $S_{\cC_k} \leq (S_{\{0\}})^{2^k}$. 

In \cite{LiWootters}, the definition of $S_{\cC}$ is slightly modified: 
\[
	S_{\cC} := \frac{1}{2^n}\sum_{x \in \F_2^n}2^{\frac{\eps n L_{\cC}(x)}{1+\eps}} \ .
\]
This little bit of extra room allows to show that, in fact, with high probability over the choice of $b_{i+1}$, $S_{\cC_i + \{0,b_{i+1}\}} \leq S_{\cC_i}^2$. By a union bound, it follows that with high probability, $S_{\cC_k} \leq (S_{\{0\}})^{2^k}$. 

In either case, to conclude the proof, one observes the bound\footnote{Actually, for the potential function in \cite{LiWootters}, one has $S_{\{0\}} \leq 1 + 2^{-n(1-h(p)-\frac{\eps}{1+\eps})}$, but this difference does not matter for the conclusion.} $S_{\{0\}} \leq 1 + 2^{-n(1-h(p)-\eps)}$ and then uses
\[
	S_{\cC_k} \leq (S_{\{0\}})^{2^k} \leq (1 + 2^{-n(1-h(p)-\eps)})^{2^k} 
\leq \exp{ 2^{k-n(1-h(p)-\eps)} } \leq O(1)
\]
for $k$ chosen as above.

\subsection{Alterations for average-radius list-decoding}

While this argument analyzes the (absolute-radius) list-decodability of random linear codes very effectively, it is not immediately clear how to generalize the argument to study average-radius list-decodability. We now introduce the additional ideas we need to derive Theorem~\ref{thm:main-avg-rad}. We will fix a threshold parameter $\lambda \in (p,\frac{1}{2})$ for which $h(\lambda) < 1-R = h(p) + \eps$, to be determined later, and define 
\[ \eta \defeq 1-R-h(\lambda).\]  

We define the function $M_{R,\lambda}:[0,1] \to \R$ by 
\[
	M_{R,\lambda}(\gamma) := \begin{cases}
		1-R-h(\gamma) & \text{if } \gamma < \lambda \\
		0 & \text{if } \gamma \geq \lambda 
	\end{cases} \ .
\]
\begin{remark} \label{rmk:entropy-change}
	One can think of this quantity as a sort of ``normalized entropy change'' up to the threshold $\lambda$. 
Recalling that $1 - R = h(p) + \eps$, if $\gamma < \lambda$, then
\[ M_{R,\lambda}(\gamma) \approx \frac{1}{n}\inparen{ h(p) - h(\lambda)} \approx \log \inparen{ \frac{|B^n(p)|}{|B^n(\gamma)|}}, \]
where $B^n(p)$ denotes the Hamming ball in $\F_2^n$ of radius $p$.
Hence, $M_{R,\lambda}(\gamma)$ is something like a normalized ``surprise'' an observer would experience if they are expecting a random vector of weight $\leq p$ and see a vector of weight $\leq \gamma$. 
\end{remark}

For a linear code $\cC \leq \F_2^n$ and $x \in \F_2^n$ we define
\[
	L_{\cC,R,\lambda}(x) := \sum_{y \in \cC}M_{R,\lambda}(\delta(x,y)) .
\]
This is intuitively the ``smoothed-out'' list-size of $x$, where nearby codewords are weighted more heavily than far away codewords, and the weighting is given by the ``entropy change'' implied by the distance from $x$ to $y$. 

Next, we define 
\[
	A_{\cC,R,\lambda}(x) := 2^{\frac{nL_{\cC,R,\lambda}(x)}{1+\eta}}
\]
and
\[
	S_{\cC,R,\lambda} := \frac{1}{2^n}\sum_{x \in \F_2^n}A_{\cC,R,\lambda}(x) .
\]
The quantity $S_{\cC,R,\lambda}$ is the potential function we will analyze. 

\subsection{Proof of Theorem~\ref{thm:main-avg-rad}}

In this subsection we prove Theorem~\ref{thm:main-avg-rad}. The quantities $R$ and $\lambda$ (and hence $\eta = 1-R-h(\lambda)$) will be fixed throughout---although the precise value of $\lambda$ will be determined later---and so we will suppress their dependence and simply write $M(x)$, $L_{\cC}(x)$, $A_{\cC}(x)$ and $S_{\cC}$. 

First, we observe that the following analog of \cite[Lemma~3.2]{LiWootters} holds. The proof is a simple adaptation of theirs (which in turn follows \cite{GuruswamiHSZ02}).

\begin{lemma} \label{lem:list-doubles}
	For all $\cC \leq \F_2^n$ and $b \in \F_2^n$, 
	\begin{align}
		L_{\cC + \{0,b\}}(x) &\leq L_{\cC}(x) + L_{\cC}(x+b) \label{eq:L-bound}, \\
		A_{\cC+\{0,b\}}(x) &\leq A_{\cC}(x) \cdot A_{\cC}(x+b) \label{eq:A-bound} .
	\end{align}
	Moreover, equality holds if and only if $b \notin \cC$. 
\end{lemma}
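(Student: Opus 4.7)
The proof plan is essentially an algebraic computation that relies on two basic observations: (i) the set $\cC + \{0,b\}$ equals the (possibly disjoint) union $\cC \cup (\cC + b)$, with equality of multiplicities exactly when $b \notin \cC$; and (ii) the Hamming distance is translation invariant, namely $\delta(x, y+b) = \delta(x+b, y)$ for all $x,y,b \in \F_2^n$. The inequality \eqref{eq:A-bound} then follows immediately from \eqref{eq:L-bound} by exponentiating, so the real work is in establishing \eqref{eq:L-bound}.

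For \eqref{eq:L-bound}, I would expand the left-hand side using the definition of $L_{\cC + \{0,b\}}(x)$. When $b \notin \cC$, the cosets $\cC$ and $\cC + b$ are disjoint, so
\begin{align*}
L_{\cC+\{0,b\}}(x) = \sum_{y \in \cC} M(\delta(x,y)) + \sum_{y \in \cC}M(\delta(x, y+b)).
\end{align*}
Applying translation invariance of $\delta$ to the second sum (replacing $\delta(x,y+b)$ by $\delta(x+b,y)$) rewrites it as $L_{\cC}(x+b)$, giving equality in \eqref{eq:L-bound}. When $b \in \cC$, we have $\cC + b = \cC$, so $L_{\cC + \{0,b\}}(x) = L_{\cC}(x)$, and the inequality follows provided $L_{\cC}(x+b) \geq 0$. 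This nonnegativity is the one small point that needs care: I would verify that the function $M = M_{R,\lambda}$ is nonnegative everywhere, using that $M$ vanishes on $[\lambda, 1]$ and that for $\gamma < \lambda$ we have $M(\gamma) = 1 - R - h(\gamma) \geq 1 - R - h(\lambda) = \eta > 0$ since $h$ is increasing on $[0,1/2]$ and $\lambda \in (p,1/2)$ was chosen so that $h(\lambda) < 1-R$.

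Having \eqref{eq:L-bound}, the bound \eqref{eq:A-bound} is immediate from the definition $A_{\cC}(x) = 2^{n L_{\cC}(x)/(1+\eta)}$: applying the monotonicity of $2^{t}$ and splitting the exponential yields $A_{\cC+\{0,b\}}(x) \leq 2^{n(L_{\cC}(x) + L_{\cC}(x+b))/(1+\eta)} = A_{\cC}(x)\cdot A_{\cC}(x+b)$. The equality clause (when $b \notin \cC$) is transparent from the disjoint-union calculation above, since every step between $L_{\cC+\{0,b\}}(x)$ and $L_{\cC}(x) + L_{\cC}(x+b)$ is an identity in that case, and exponentiation preserves equality. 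There is no real obstacle here — this lemma is essentially a set-theoretic bookkeeping statement, and the only substantive check is the nonnegativity of $M_{R,\lambda}$, which follows directly from the constraints imposed on $\lambda$.
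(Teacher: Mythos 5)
Your proposal is correct and matches the paper's argument: both rest on decomposing $\cC+\{0,b\}$ as $\cC\cup(\cC+b)$, using the translation invariance $\delta(x,y+b)=\delta(x+b,y)$, noting the union is disjoint exactly when $b\notin\cC$ (which gives the equality clause), and deducing \eqref{eq:A-bound} by exponentiating. Your explicit check that $M_{R,\lambda}\ge 0$ (needed for the $b\in\cC$ case) is a point the paper leaves implicit, but it is the same proof.
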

\begin{proof}
    We have
    \begin{align}
        L_{\cC+\{0,b\}}(x)
        &\le   \sum_{y\in \cC} M(\delta(x,y))+M(\delta(x,y+b))\nonumber\\
        &=   \sum_{y\in \cC} M(\delta(x,y))+M(\delta(x+b,y))
        = L_{\cC}(x) + L_\cC(x+b),
    \end{align}
    and equality holds in the first line if and only if $\cC \cap (b+\cC) = \emptyset$, or, equivalently, $b \notin \cC$. The second inequality of the lemma statement follows from the first.
\end{proof}

Next, we bound $S_{\{0\}}$. We have 
\begin{align*}
	S_{\{0\}} &\leq 1 + 2^{-n}\sum_{\substack{x \in \F_2^n \\ \wt x \leq \lambda}} 2^{\frac{n \cdot (1 - R - h(\wt x))}{1+\eta}} \\
	&\leq 1 + \sum_{i=0}^{\lfloor \lambda n \rfloor} 2^{-n\inparen{1-h\inparen{i/n}-\frac{h(\lambda)+\eta-h\inparen{i/n}}{1+\eta}}} .
\end{align*}
As this sum is dominated by its last term, we deduce 
\begin{align} \label{eq:S_0-bound}
	S_{\{0\}} \leq 1 + (\lambda n)2^{-n\inparen{1-h(\lambda)-\tfrac{\eta}{1+\eta}}}.
\end{align}

From here, we can combine Lemma~\ref{lem:list-doubles} and \eqref{eq:S_0-bound} to deduce
\begin{restatable}{lemma}{SkBound} \label{lem:S_k-bound}
	Let $p \in (0,\frac 12)$ and $R = 1-h(p)-\eps$ for $0 < \eps < 1-h(p)$. Let $\cC_{Rn} \leq \F_2^n$ be a random linear code of rate $R$. Then $S_{\cC_{Rn}} \leq 2$ with probability at least $1-\exp{-\Omega_{\eta}(n)}$.
\end{restatable}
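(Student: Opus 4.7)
The plan is to follow the dimension-by-dimension strategy of \cite{GuruswamiHSZ02, LiWootters}, adapted to the smoothed potential $S_\cC$. Sample $b_1, \ldots, b_k \in \F_2^n$ i.i.d.\ uniformly, where $k = Rn$, and let $\cC_i := \spn\{b_1, \ldots, b_i\}$; up to a $2^{-\Omega(n)}$ event on dimension collapse, this has the same distribution as a random linear code of rate $R$. The overall strategy is to show that $S_{\cC_i}$ (approximately) squares with each added basis vector, so that $S_{\cC_k} \le S_{\{0\}}^{2^k}$ with high probability; then the bound \eqref{eq:S_0-bound} on $S_{\{0\}}$ closes out the estimate.

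By Lemma~\ref{lem:list-doubles}, $S_{\cC_{i+1}} \le 2^{-n}\sum_{x}A_{\cC_i}(x)A_{\cC_i}(x+b_{i+1})$, and taking the expectation over $b_{i+1}$ yields the ``in-expectation'' doubling $\EE{S_{\cC_{i+1}}\mid \cC_i} \le S_{\cC_i}^2$. The main obstacle, I expect, is promoting this to a with-high-probability statement: $\Pr_{b_{i+1}}[S_{\cC_{i+1}} \le S_{\cC_i}^2] \ge 1 - 2^{-\Omega_\eta(n)}$. This should follow by essentially the same concentration argument as in \cite{LiWootters}, which exploits the slack provided by the $\tfrac{1}{1+\eta}$ factor baked into the exponent of $A_\cC$. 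The only adaptation needed is formal: the sharp list-size $L_\cC$ of \cite{LiWootters} is replaced by the smoothed-out list-size $L_{\cC,R,\lambda}$, but since the submultiplicativity given by Lemma~\ref{lem:list-doubles} is formally identical in shape to its analogue in \cite{LiWootters}, the argument should go through with only cosmetic changes.

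Given the high-probability doubling bound, a union bound over $i = 0, 1, \ldots, k-1$ yields $S_{\cC_k} \le S_{\{0\}}^{2^k}$ with probability at least $1 - 2^{-\Omega_\eta(n)}$. Substituting \eqref{eq:S_0-bound} and using $\eta = 1-R-h(\lambda)$, a short computation gives $1 - h(\lambda) - \tfrac{\eta}{1+\eta} = R + \tfrac{\eta^2}{1+\eta}$, so on this event
\[
    S_{\cC_k} \le \left(1 + \lambda n \cdot 2^{-n(R + \eta^2/(1+\eta))}\right)^{2^{Rn}} \le \exp{\lambda n \cdot 2^{-n\eta^2/(1+\eta)}} \le 2
\]
for $n$ sufficiently large in terms of $\eta$ (equivalently, in terms of $p$ and $\eps$, once $\lambda$ is chosen), where we used $(1+x)^y \le \exp{xy}$. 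The critical quantitative point is that the ``extra'' $\tfrac{\eta^2}{1+\eta}$ in the exponent over the rate $R$ is exactly what allows the $2^{Rn}$-fold exponentiation of $S_{\{0\}}$ to remain bounded, which in turn is the payoff for having inserted the $\tfrac{1}{1+\eta}$ factor in the definition of $A_\cC$.
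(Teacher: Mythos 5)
Your proposal is correct and follows essentially the same route as the paper's own proof in Appendix~\ref{app:deferred-proof}: build the code one dimension at a time, use the submultiplicativity of Lemma~\ref{lem:list-doubles} to get (approximate) squaring of the potential with high probability per step, union bound over the $Rn$ steps, and close with \eqref{eq:S_0-bound}, where the slack $\eta^2/(1+\eta)$ in the exponent above $R$ does exactly the work you identify. The one (harmless) imprecision is the per-step claim: the Markov-type argument of \cite{LiWootters}, reproduced in the appendix via $T_\cC = S_\cC - 1$, gives $T_{\cC_{i+1}} \leq 2T_{\cC_i} + T_{\cC_i}^{3/2}$ with failure probability $T_{\cC_i}^{1/2}$ rather than exact squaring with failure probability $2^{-\Omega_\eta(n)}$ outright, so one must carry inductively that $T_{\cC_i} \leq 2^{i+1}\inparen{S_{\{0\}}-1}$ stays exponentially small to make the union bound work---quantitatively this lands in the same place as your final computation.
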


The proof of this lemma is completely analogous to that of \cite[Lemma~3.3]{LiWootters}. 
One only needs to be careful about the growth rate of $S_{\cC}$.
In particular, this proof crucially uses that $\eta$ is positive.
We again choose vectors $b_1,\dots,b_{Rn}$ independently and uniformly at random.
If $\cC_i = \spn\{b_1,\dots,b_i\}$, we need ``in expectation'' that $S_{\cC_i}\le 1+2^{-\Omega(n)}$ for all $i$ for the error bounds to succeed.
As we expect the $o(1)$ term to roughly double, we need $2^{Rn}\cdot (S_{(0)}-1) = 2^{-n(\eta - \frac{\eta}{1+\eta})} \le 2^{-\Omega_\eta(n)}$. For completeness, we provide the proof of Lemma~\ref{lem:S_k-bound} in Appendix~\ref{app:deferred-proof}.

Thus, in order to conclude Theorem~\ref{thm:main-avg-rad}, we are simply required to demonstrate that $S_{\cC} \leq 2$ implies that $\cC$ is $(p,L)$-average-radius list-decodable: this is the crux of our contribution. The main lemma we require is the following. 

\begin{lemma} \label{lem:S_C-to-entropy}
	Let $\cC\leq \F_2^n$ be a linear code of rate $R$ such that $S_{\cC} \leq 2$. Then, for all $x \in \F_2^n$ and $D \subseteq \cC \cap B(x,\lambda)$, it holds that
	\[
		\sum_{y \in D}h(\delta(x,y)) \geq (|D|-1-\eta)(1-R) - \frac{1+\eta}{n} .
	\]
\end{lemma}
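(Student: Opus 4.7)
The plan is to convert the global potential bound $S_{\cC} \leq 2$ into a pointwise bound on $L_{\cC}(x)$ that is tighter by a factor of $|\cC|=2^{Rn}$ than the trivial one, and then simply rearrange into the desired entropy inequality. The key observation that saves the factor $|\cC|$ is the translation invariance of the linear code: the summands in the definition of $S_{\cC}$ are constant on cosets of $\cC$, so really only $2^{(1-R)n}$ of the $2^n$ summands are distinct.

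More precisely, for any $c\in\cC$ the map $y \mapsto y-c$ is a bijection of $\cC$ with itself, so
\[
  L_{\cC}(x+c) \;=\; \sum_{y\in\cC} M(\delta(x+c,y)) \;=\; \sum_{y'\in\cC} M(\delta(x,y')) \;=\; L_{\cC}(x),
\]
and therefore $A_{\cC}(x+c)=A_{\cC}(x)$. The hypothesis $S_{\cC}\leq 2$ reads $\sum_{x'\in\F_2^n} A_{\cC}(x') \leq 2^{n+1}$, and restricting this sum to the coset $x+\cC$ gives $|\cC|\cdot A_{\cC}(x) \leq 2^{n+1}$, i.e., $A_{\cC}(x)\leq 2^{(1-R)n+1}$. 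Taking $\log_2$ and multiplying by $(1+\eta)/n$ yields the pointwise bound
\[
  L_{\cC}(x) \;\leq\; (1+\eta)(1-R) \;+\; \frac{1+\eta}{n}.
\]

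To finish, observe that every $y\in D\subseteq \cC\cap B(x,\lambda)$ satisfies $\delta(x,y)<\lambda$ (using that $\delta(x,y)$ takes only the values $0,1/n,2/n,\ldots$, so we may assume $\lambda$ is not of this form, or else slightly shrink $\lambda$), which by the definition of $M$ gives $M(\delta(x,y)) = 1-R-h(\delta(x,y))$. Hence
\[
  L_{\cC}(x) \;\geq\; \sum_{y\in D} M(\delta(x,y)) \;=\; |D|(1-R) \;-\; \sum_{y\in D} h(\delta(x,y)),
\]
and combining with the pointwise bound from the previous step yields
\[
  \sum_{y\in D} h(\delta(x,y)) \;\geq\; |D|(1-R) - L_{\cC}(x) \;\geq\; (|D|-1-\eta)(1-R) - \frac{1+\eta}{n},
\]
as desired. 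I do not foresee any serious obstacle here: the only nontrivial idea is the translation-invariance step that upgrades a per-point bound to a coset-average bound, and the rest is arithmetic. The mild technicality about codewords sitting exactly at distance $\lambda$ is harmless since distances in $\F_2^n$ are discrete while $\lambda$ can be chosen generic.
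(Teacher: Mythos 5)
Your proof is correct and follows essentially the same route as the paper's: both exploit the coset-invariance $A_{\cC}(x+c)=A_{\cC}(x)$ to upgrade $S_{\cC}\leq 2$ into the pointwise bound $A_{\cC}(x)\leq 2^{(1-R)n+1}$ (the paper phrases this as the maximum of $A_{\cC}$ being attained at least $|\cC|$ times), and then combine it with the lower bound $L_{\cC}(x)\geq |D|(1-R)-\sum_{y\in D}h(\delta(x,y))$ and rearrange. Your remark about codewords at distance exactly $\lambda$ is a harmless technicality the paper glosses over, and your handling of it is fine.
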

\begin{proof}
	First, observe that for any $x \in \F_2^n$,
	\begin{align}
		L_{\cC}(x) &\geq \sum_{y \in D}\left((1-R)-h(\delta(x,y))\right) = |D|(1-R) - \sum_{y \in D}h(\delta(x,y)) \nonumber\\
		\text{so}\qquad
		\log A_{\cC}(x) &\geq n\frac{|D|(1-R) - \sum_{y \in D}h(\delta(x,y))}{1+\eta} .
		\label{eq:LD-bound}
	\end{align}
	Next, as $\delta(x,y) = \delta(x+z,y+z)$ for any $z \in \F_2^n$, we have, for any $x\in\mathbb{F}_2^n$ and $c\in \cC$, that $L_{\cC}(x)=L_{\cC}(x+c)$ and hence $A_{\cC}(x)=A_{\cC}(x+c)$.
	Thus, $\max_{x\in \mathbb{F}_2^n}A_\cC(x)$ is attained at at least $|\cC|$ different values of $x$, so
	\[
		S_{\cC} = \frac{1}{2^n}\sum_{x \in \F_2^n}A_{\cC}(x) \ge \frac{1}{2^n}\cdot |\cC|\cdot\max_{x\in \mathbb{F}_2^n}A_\cC(x)
		= 2^{-(1-R)n}\cdot \max_{x\in \mathbb{F}_2^n}A_\cC(x).
	\]
	Combining this with \eqref{eq:LD-bound}, we have, for any $x \in \F_2^n$, 
	\begin{align*}
		1 &\geq \log S_{\cC} \geq -(1-R)n + \log\inparen{A_{\cC}(x)} \\
		&\geq n \cdot \inparen{-(1-R) + \frac{|D|(1-R) - \sum_{y \in D}h(\delta(x,y))}{1+\eta}} \\
		&= n \cdot \frac{(|D|-1-\eta)(1-R) - \sum_{y \in D}h(\delta(x,y))}{1+\eta} .
	\end{align*}
	Rearranging yields the lemma.
\end{proof}

We may now conclude Theorem~\ref{thm:main-avg-rad}.

\begin{proof} [Proof of Theorem~\ref{thm:main-avg-rad}]
	Since $L = \lfloor\frac{h(p)}{\eps}+2\rfloor>\frac{h(p)}\eps+1 = \frac{1-R}\eps$, there exists $\eta>0$ small enough so that for all sufficiently large $n$
	\begin{align}\label{eq:L-condn}
		L > \frac{1-R+\eta+\frac{1+\eta}{n}}{\eps-\eta}.
	\end{align}
Thus, we define $\lambda$ so that $\eta$ (which we defined as $\eta = 1 - R - h(\lambda)$) 
satisfies \eqref{eq:L-condn}.
	Let $\cC$ be a random linear code of rate $R$. Due to Lemma~\ref{lem:S_k-bound}, the conclusion of Lemma~\ref{lem:S_C-to-entropy}, holds with probability $1-2^{-\Omega_{\eta}(n)}$ for $\cC$. It remains to show that, assuming $n$ is sufficiently large, any code $\cC$ satisfying the conclusion of Lemma~\ref{lem:S_C-to-entropy} is $(p,L)$-average-radius list-decodable. 
	
	Let $x \in \F_2^n$ and $\Lambda \subseteq \cC$ such that $|\Lambda| = L$; our goal is to show that, for all such $x$ and $\Lambda$, 
	\begin{align} \label{eq:avg-rad-goal}
		\frac{1}{L}\sum_{y \in \Lambda}\delta(x,y) > p .
	\end{align}
	Let 
	\[
		D = \{y\in \Lambda : \delta(x,y) \leq \lambda\}
	\]
	and define 
	\[
		h^*(\alpha) = \begin{cases}
			h(\alpha) & \text{if } \alpha \leq \frac 12 \\
			1 & \text{if } \alpha  > \frac 12
		\end{cases} .
	\]
	Now,
	\begin{align}
		\sum_{y\in \Lambda}h^*(\delta(x,y)) &\geq \sum_{y\in D}h(\delta(x,y)) + (L-|D|)h(\lambda) \label{eq:i}\\
		&\geq (|D|-1-\eta)(1-R)+(L-|D|)(1-R-\eta) - \frac{1+\eta}{n} \label{eq:ii}\\
		&=(1-R)\cdot (L-1) - \eta\cdot(1-R)-\eta\cdot(L-|D|) - \frac{1+\eta}{n} \nonumber \\
		&\geq (1-R)\cdot (L-1) -\eta\cdot (L+1) - \frac{1+\eta}{n} \nonumber \\
		&= (1-R)L - (1-R) - \eta\cdot(L+1) - \frac{1+\eta}{n} \nonumber \\
		&=Lh(p)-(1-R)-(L+1)\eta + L\eps - \frac{1+\eta}{n} \label{eq:iii}\\
		&=Lh(p)-(1-R+\eta)+L(\eps-\eta) - \frac{1+\eta}{n} \nonumber\\
		&>Lh(p) .\label{eq:iv}
	\end{align}
	Here, the Inequality~\eqref{eq:i} holds because $h^*(\alpha)> h(\lambda)$ for all $\alpha>\lambda$; Inequality~\eqref{eq:ii} is the conclusion of Lemma~\ref{lem:S_C-to-entropy}; Equality~\eqref{eq:iii} follows from the fact that $R = 1 - h(p) - \eps$; and Inequality~\eqref{eq:iv} follows from \eqref{eq:L-condn}. Thus, we deduce
	\begin{align} \label{eq:entropy-bound}
		\frac{1}{L}\sum_{y\in \Lambda}h^*(\delta(x,y)) > h(p) .
	\end{align}
	Since $h^*$ is concave,
	\[
		h^*\inparen{\frac{1}{L}\sum_{y\in \Lambda}(\delta(x,y))} \geq \frac{1}{L}\sum_{y\in \Lambda}h^*(\delta(x,y)) ,
	\]
	and so \eqref{eq:avg-rad-goal} follows from \eqref{eq:entropy-bound}, the monotonicity of $h^*$ and the fact that $h^*(p) = h(p)$. 
\end{proof}

\begin{remark} \label{rmk:rank-metric}
    Just as the argument in \cite{LiWootters} generalizes easily to the case of rank-metric codes, the same holds for the argument given above. Briefly, a rank-metric code is a set of matrices $\cC \subseteq \F^{m \times n}$, and the rank-distance between two matrices $X$ and $Y$ is $\delta_R(X,Y) = \frac{1}{n}\mathrm{rank}(X-Y)$ (where we assume without loss of generality that $n \leq m$). Using this notion of distance, one can again obtain a notion of list-decodability, and moreover \emph{average-radius} list-decodability. There is a ``rank-metric'' list-decoding capacity $R^*(p)$. \cite{LiWootters} showed that random linear rank-metric codes over the binary field $\F_2$ of rate $R^*(p)-\eps$ are with high probability $(p,(1-R^*(p))/\eps+2)$-list-decodable, and one can adapt the argument above to show that such codes are with high probability $(p,(1-R^*(p))/\eps+2)$-\emph{average-radius} list-decodable.
\end{remark}

\bibliographystyle{alpha}
\bibliography{refs}

\appendix

\section{Proof of Lemma~\ref{lem:S_k-bound}}\label{app:deferred-proof}
First, we restate Lemma~\ref{lem:S_k-bound} for the reader's convenience. 

\SkBound*

To prove Lemma~\ref{lem:S_k-bound}, we introduce the notation $T_\cC = S_\cC-1$ and show that if $T_\cC$ bounded away from 1, it doubles with sufficiently large probability whenever we add a uniformly random vector to $\cC$.

\begin{lemma}  \label{lem:T_C-markov}
    If $\cC \leq \F_2^n$ is a fixed linear code,
    \[
        \PRover{b\sim \F_2^n}{S_{\cC+\{0+b\}} > 1+2T_{\cC}+T_{\cC}^{1.5}} < T_{\cC}^{0.5} .
    \]
\end{lemma}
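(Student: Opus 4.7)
The plan is to upper bound $S_{\cC+\{0,b\}}$ pointwise in $b$ by a simpler random variable whose mean has an exact closed form in terms of $T_\cC$, and then apply Markov's inequality. The starting point is Lemma~\ref{lem:list-doubles}, which tells us that $A_{\cC+\{0,b\}}(x) \leq A_\cC(x)\,A_\cC(x+b)$ for every $x \in \F_2^n$. Averaging over $x$, this gives
\[
S_{\cC+\{0,b\}} \;\leq\; Z_b \;:=\; \frac{1}{2^n}\sum_{x\in\F_2^n} A_\cC(x)\,A_\cC(x+b),
\]
so it suffices to bound $\Pr_b[Z_b > 1 + 2T_\cC + T_\cC^{1.5}]$.

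The key algebraic observation I would use is that $A_\cC(x)\geq 1$ for every $x$ (since $L_\cC(x)\geq 0$), so the quantity $A_\cC(x)-1$ is nonnegative. Expanding
\[
(A_\cC(x)-1)(A_\cC(x+b)-1) \;=\; A_\cC(x)A_\cC(x+b) - A_\cC(x) - A_\cC(x+b) + 1
\]
and averaging over $x$, one gets the clean identity
\[
Z_b \;=\; 1 + 2T_\cC + W_b, \qquad W_b \;:=\; \frac{1}{2^n}\sum_{x\in\F_2^n}(A_\cC(x)-1)(A_\cC(x+b)-1) \;\geq\; 0,
\]
which uses only that $\frac{1}{2^n}\sum_x A_\cC(x) = S_\cC = 1 + T_\cC$. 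This reduces the target event $\{Z_b > 1 + 2T_\cC + T_\cC^{1.5}\}$ to the event $\{W_b > T_\cC^{1.5}\}$ on the nonnegative quantity $W_b$.

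Next I would compute $\EE_b W_b$ exactly. Since $x \mapsto x+b$ is a bijection of $\F_2^n$ for each fixed $b$, and $b$ is uniform on $\F_2^n$, for every fixed $x$ the random variable $A_\cC(x+b)$ has the same distribution as $A_\cC(b)$, so $\EE_b[A_\cC(x+b)-1] = S_\cC - 1 = T_\cC$. Pulling the expectation inside the sum (which is legal because the summand factors as $(A_\cC(x)-1)$ times $(A_\cC(x+b)-1)$, and $A_\cC(x)$ does not depend on $b$), we obtain
\[
\EE_b W_b \;=\; \frac{1}{2^n}\sum_{x}(A_\cC(x)-1)\cdot T_\cC \;=\; T_\cC \cdot T_\cC \;=\; T_\cC^2.
\]

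Finally, Markov's inequality applied to the nonnegative variable $W_b$ gives
\[
\Pr_b\bigl[W_b > T_\cC^{1.5}\bigr] \;\leq\; \frac{\EE_b W_b}{T_\cC^{1.5}} \;=\; \frac{T_\cC^2}{T_\cC^{1.5}} \;=\; T_\cC^{0.5},
\]
and combining with $S_{\cC+\{0,b\}} \leq Z_b = 1 + 2T_\cC + W_b$ yields the claim. There is no serious obstacle here: the only nontrivial step is recognizing the ``shifted'' second-moment identity $Z_b = 1 + 2T_\cC + W_b$, which is what turns a naive bound $\EE_b Z_b = S_\cC^2$ (too weak to give a Markov bound of $T_\cC^{0.5}$) into a Markov bound on the centered quantity $W_b$ whose mean is $T_\cC^2$ rather than $S_\cC^2$. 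Strictness of the inequality can be recovered if needed, or the statement can be read with $\leq$; the probabilistic argument for Lemma~\ref{lem:S_k-bound} is insensitive to this.
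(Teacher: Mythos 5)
Your proposal is correct and follows essentially the same route as the paper's proof: both bound $S_{\cC+\{0,b\}}$ via Lemma~\ref{lem:list-doubles}, rewrite the resulting average as $1+2T_\cC$ plus the centered quantity $\frac{1}{2^n}\sum_x (A_\cC(x)-1)(A_\cC(x+b)-1)$, compute its expectation over $b$ to be $T_\cC^2$ using the uniformity of $x+b$, and finish with Markov's inequality. The only (immaterial) difference is the strict versus non-strict inequality in the final Markov step, which you rightly note does not affect the application in Lemma~\ref{lem:S_k-bound}.
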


\begin{proof}
    Applying Lemma~\ref{lem:list-doubles}, for any fixed $b \in \F_2^n$, 
    \begin{align}
        S_{\cC} &= \Eover{x \sim \F_2^n}{A_{\cC+\{0,b\}}(x)}\\
        &\leq \Eover{x \sim \F_2^n}{A_\cC(x)A_{\cC+b}(x)} \\
        &= \Eover{x \sim \F_2^n}{-1+A_{\cC}(x)+A_{\cC}(x+b)+(A_{\cC}(x)-1)(A_{\cC}(x+b)-1)} \\
        &= 1 + 2T_{\cC} + \Eover{x \sim \F_2^n}{(A_{\cC}(x)-1)(A_{\cC}(x+b)-1)} . 
    \end{align}
    Now, if $x$ and $b$ are independent and uniformly random over $\F_2^n$, then so are $x$ and $x+b$, so we conclude
    \[
        \mathop{\mathbb{E}}_{b}\Eover{x}{(A_{\cC}(x)-1)(A_{\cC}(x+b)-1)} = \Eover{x}{(A_{\cC}(x)-1)}\Eover{b}{(A_{\cC}(x+b)-1)} = T_{\cC}^2.
    \]
    Hence, applying Markov's inequality (which is justified as $A_\cC(x)-1 > 0$ for all $x$),
    \begin{align*}
        \PRover{b\sim \F_2^n}{S_{\cC+\{0+b\}} > 1+2T_{\cC}+T_{\cC}^{1.5}} &\leq \PRover{b\sim \F_2^n}{\Eover{x}{(A_{\cC}(x)-1)}\Eover{b}{(A_{\cC}(x+b)-1)} > T_{\cC}^{1.5}} \\
        &< \frac{T_{\cC}^{1.5}}{T_{\cC}^2} = T_{\cC}^{0.5}. \qedhere
    \end{align*}
\end{proof}

We can now iteratively apply Lemma~\ref{lem:T_C-markov} to conclude Lemma~\ref{lem:S_k-bound}. 
\begin{proof} [Proof of Lemma~\ref{lem:S_k-bound}]
    Throughout the argument, we may assume $n$ is sufficiently large compared to $\eta$. First, for $i=0,1,\dots,k$, consider
    \begin{align*}
        \delta_0 := \lambda n 2^{-n(1-h(\lambda)-\frac{\eta}{1+\eta})},\\
        \delta_i := 2\delta_{i-1}+\delta_{i-1}^{1.5}.
    \end{align*}
    By induction, we claim that for all $i \leq n(1-h(p)-\eps)$, we have $\delta_i < 2^{i+1}\delta_0 < 2^{-\frac{\eta^2n}{3}}$.
    First, we note that for $k=(1-h(p)-\eps)n$,
    \begin{align}
        2^{k+1}\delta_0 &= (2\lambda n)\cdot 2^{n(1-h(p)-\eps)}\cdot 2^{-n(1-h(\lambda)-\frac{\eta}{1-\eta})} \nonumber\\
        &\leq (2\lambda n)\cdot 2^{n[(1-h(p)-\eps)-(1-h(\lambda)-\eta)-\frac{\eta^2}{2}]} \label{eq:app1}\\
        &= (2\lambda n)\cdot 2^{-\frac{\eta^2n}{2}} \label{eq:app2}\\
        &< 2^{-\frac{\eta^2n}{3}} \label{eq:app3}.
    \end{align}
    In the above, Inequality~\eqref{eq:app1} follows from the inequality $\frac{\eta}{1-\eta} \geq \eta + \eta/2$, valid for $\eta \in (0,1)$. We used the equality $h(p)+\eps = h(\lambda) + \eta$ to obtain \eqref{eq:app2}. The last line, \eqref{eq:app3}, holds for sufficiently large $n$. Hence, $2^{i+1}\delta_0 < 2^{-\frac{\eta^2 n}{3}}$ for all $i \leq k$, so we may assume this in inductively proving $\delta_i < 2^{i+1}\delta_0$ for all $i=0,1,\dots,k$. 
    
    Now, we clearly have $\delta_0 < 2\delta_0$ (so the base case of the induction holds), while for $i\geq 1$ we bound
    \[
        \delta_i = 2\delta_{i-1}\inparen{1+\tfrac{\sqrt{\delta_{i-1}}}{2}} = 2^i\delta_0\prod_{j=0}^{i-1}\inparen{1+\tfrac{\sqrt{\delta_j}}{2}} \leq 2^i\delta_0\cdot \exp{\frac{1}{2}\cdot\sum_{j=0}^{i-1}\sqrt{\delta_j}} < 2^{i+1}\delta_0.
    \]
    In the first two equalities, we applied the definitions of the $\delta_i$'s. The first inequality applies the estimate $1+z\leq e^z$, while the second uses the induction hypothesis $\delta_j<2^{-\frac{\eta^2 n}{3}}$ for $j<i$ and by ensuring $n$ is sufficiently large. 
    
    Now, let $b_1,\dots,b_k \sim \F_2^n$ be i.i.d. uniform random vectors, and let $\cC_i = \spn\{b_1,\dots,b_i\}$ denote the ``intermediate'' random linear codes. Call $\cC_i$ \emph{good} if $T_{\cC_i} \leq \delta_i$; we wish to show that with high probability, $\cC_i$ is good for all $i \leq k$. For $i=0$, we apply \eqref{eq:S_0-bound} and obtain
    \[
        T_{\cC_0} = S_{\{0\}}-1 \leq  1+\lambda n2^{-n(1-h(\lambda)-\frac{\eta}{1+\eta)}} - 1 = \delta_0.
    \]
    Now, let $i \geq 1$ and assume $\cC_i$ is good. By Lemma~\ref{lem:T_C-markov}, 
    \begin{align*}
        \PR{\cC_{i+1} \text{ is not good}} &= \PR{T_{\cC_{i+1}} > \delta_{i+1}} \leq \PR{T_{\cC_{i+1}} > 2T_{\cC_i} + T_{\cC_i}^{1.5}} \\
        &< T_{\cC_i}^{0.5} \leq \delta_i^{0.5}.
    \end{align*}
    Thus, with probability at least 
    \[
        1-\inparen{\delta_0^{0.5} + \delta_1^{0.5} + \cdots + \delta_k^{0.5}} > 1-k2^{-\frac{\eta^2n}{6}} \geq 1 - 2^{-\Omega_\eta(n)}
    \]
    we have $T_{\cC_i} \leq \delta_i$ for all $i=0,1,\dots,k$, as desired. In particular, we conclude $S_{\cC_k} = 1+T_{\cC_k} \leq 1+2^{-\frac{\eta^2 n}{3}} \leq 2$ with probability $1-2^{-\Omega_\eta(n)}$.
\end{proof}

\end{document}